\setlist[enumerate,1]{}
\theoremstyle{plain}
\newtheorem{theorem}{Theorem}
\newtheorem*{proof}{Proof}
\newtheorem{lemma}{Lemma}
\theoremstyle{definition}
\newtheorem{definition}{Definition}
\newcommand*{\QEDB}{\hfill\ensuremath{\square}}%
\crefname{equation}{Eqn.}{Eqns.}
\crefname{ineq}{Ineq.}{Ineqs.}
\crefname{figure}{Fig.}{Figs.}
\crefname{table}{Table}{Tables}
\crefname{theorem}{Theorem}{Theorems}
\crefname{lemma}{Lemma}{Lemmas}
\let\oldLemma\lemma
\renewcommand{\lemma}{%
  \crefalias{Theorem}{Lemma}
  \oldLemma}
\crefname{section}{Section}{Sections}
\pgfpointdiff{\northeast}{\southwest}
\begin{document}
\supertitle{Brief Paper: This paper is a postprint of a paper submitted to and accepted for publication in IET Control Theory \& Applications and is subject to Institution of Engineering and Technology Copyright. The copy of record is available at the IET Digital Library}
\title{Consensus of second order multi-agents with actuator saturation and asynchronous time-delays}

\author{\au{Venkata Karteek Yanumula$^{1}$} \au{Indrani Kar$^1$} \au{Somanath Majhi$^1$}}

\address{\add{1}{Dept. of EEE, IIT Guwahati-781039, India.}
\email{yanumula@iitg.ernet.in}}

\begin{abstract}
\looseness=-1 This article presents the consensus of a saturated second order multi-agent system with non-switching dynamics that can be represented by a directed graph. The system is affected by data processing (input delay) and communication time-delays that are assumed to be asynchronous. The agents have saturation nonlinearities, each of them is approximated into separate linear and nonlinear elements. Nonlinear elements are represented by describing functions. Describing functions and stability of linear elements are used to estimate the existence of limit cycles in the system with multiple control laws. Stability analysis of the linear element is performed using Lyapunov-Krasovskii functions and frequency domain analysis. A comparison of pros and cons of both the analyses with respect to time-delay ranges, applicability and computation complexity is presented. Simulation and corresponding hardware implementation results are demonstrated to support theoretical results.
\end{abstract}

\maketitle
\tikzstyle{block} = [draw, fill=blue!5, rectangle, minimum height=3em, minimum width=3em]
\tikzstyle{sum} = [draw, fill=blue!5, circle, node distance=1cm]

\section{Introduction}\label{intro}
In the recent past, multi-agent systems have attracted a lot of attention due to their wide range of application in robotics, unmanned air and underwater vehicles, automated traffic signal control, wireless sensor networks, etc.. One of the most important problems in coordinated control is consensus of a multi-agent system, which deals with algorithms required for the convergence of agents \cite{Jadbabaie2003, Saber2004}. After the initial study by Vicsek {\it et al.}\cite{Vicsek1995} on  self-ordered motions in biologically motivated particles, Jadbabaie {\it et al.} \cite{Jadbabaie2003} gave the theoretical explanation. Olfati-Saber {\it et al.} \cite{Saber2004} provided mathematical analysis of consensus behaviour in linear first order agents with time-delay using graph theory concepts. Multi-agent consensus problems with higher order agents,  switching topologies, time-delays, nonlinearities etc., started receiving more attention \cite{Cao2013,Wang2016} after the initial results given by authors in  \cite{Jadbabaie2003,Saber2004}.

However, the majority of control laws are designed to solve consensus problems in linear multi-agent systems \cite{Jadbabaie2003, Saber2004, Xiao2008, RenW2008, Hu2010, Munz2010, Meng2011, Zhang2013, Meng2016}. For linear systems, it has been shown that eigenvalues of graph laplacian play an important role in estimating whether the network of agents converge. Since nonlinearities are unavoidable in most of the practical applications, nonlinear agents and the corresponding control laws are being considered recently \cite{Yu2010, Liu2013, Li2015}. Mobile agents generally have limited capability due to  factors like actuator saturation, moment of inertia, maximum limit on velocity, etc.. Actuator saturation is frequently encountered due to limitations in hardware. Some of the researchers focused on consensus in multi-agent systems with saturation in first order \cite{Li2011} and second order agents \cite{Meng2013, Wei2014, Chu2015, Su2015, Cui2016, You2016}. 

Apart from eigenvalues of graph laplacian, time-delays play major role in stability of multi-agent network. In practical applications, time-delays are inevitable and are classified into two categories; communication and input time-delays. The amount of time taken by agents to communicate is defined as communication time-delay and the amount of time taken by agents to process the information received from other agents is called input time-delay. Olfati-Saber {\it et al.}\cite{Saber2004} started the analysis of time-delay effects on multi-agent systems and gave an upper bound for first order agents considering constant uniform communication and input time-delays. Later on it was extended to systems with first order agents and uniform time-varying delay\cite{Xiao2008} multiple delays \cite{Sun2008, Munz2010}, second order agents with constant time-delays \cite{Hu2010, Lin2010} and system with second order agents with non-uniform delay \cite{Zhang2013}. 

The majority of research is confined to linear agents with time-delays and recently nonlinear agents with time-delays are receiving attention \cite{Liu2013,You2016}. Furthermore, nonlinearities are common in mobile agents and actuator saturation is the most frequent hard nonlinearity affecting them. For example, the acceleration of an agent is constant over certain range and cannot be maintained after the agent attains its maximum velocity. Recently, saturation nonlinearity is receiving considerable attention. Li {\it et al.} \cite{Li2011} considered a first order system with saturation and without time-delays. For second order agents with saturation and without time-delay, a differential gain feedback control is used by authors in \cite{Meng2013,Wei2014}. Adaptive control laws with an observer are used by Chu {\it et al.} \cite{Chu2015} and nonlinear agents are considered by Cui {\it et al.} \cite{Cui2016}. The effects of synchronous time-delays are taken into consideration by You {\it et al.} \cite{You2016} for a network of second order saturated agents.  

It is evident from the literature that, there is very little focus on consensus of second order saturated multi-agent system with asynchronous communication and input time-delays. In this contribution, a multi-agent system is considered with asynchronous time-delays and hard saturation nonlinearities. The objective of the article is to extend the results of Liu {\it et al.} \cite{Liu2013} for velocity saturated nonlinear multi-agent system with time-delays using a different approach. Describing function analysis \cite{slotine_chap} is used to break agents into approximate linear and nonlinear elements, with the nonlinear element represented by an appropriate describing function. The difference among position states and velocity states of agents is defined as error dynamics. The system achieves consensus when the error dynamics are asymptotically stable. Here, the existence of limit cycles in the multi-agent system is estimated with the help of describing functions and stability of linear element. Lyapunov-Krasovskii functions and frequency domain analysis are used to prove the stability of linear element and further estimate the stability of limit cycles. Consensus is achieved when there are no limit cycles. Some necessary and sufficient conditions for consensus in terms of linear matrix inequalities and explicit expressions are derived.
The major contributions of the paper can be summarised as, 
\begin{enumerate*}
\item Deriving various conditions for four consensus control laws with asynchronous time-delays;
\item Describing function analysis is used to estimate the limit cycle behaviour of the system;
\item Stability analysis of the linear element using Lyapunov-Krasovskii and frequency domain approaches is performed;
\item A comparison of pros and cons of both the stability analyses is presented;
\item Simulations and further validation of results on a four-agent and a five-agent networks are demonstrated to support theoretical analysis.
\end{enumerate*}

The rest of the paper is organised as follows, \cref{sec2} explains graph theory preliminaries. \cref{sec3} elaborates the system model with four control laws given in \cref{eqn1c,eqn1d,eqn1e,eqn1f}. Stability analysis is performed using Lyapunov-Krasovskii functions for control laws in \cref{eqn1c,eqn1d,eqn1e,eqn1f}, using the Nyquist stability criterion for control laws in \cref{eqn1c,eqn1d}. Furthermore, simulation and implementation of the control laws on two networks are explained. Depiction of results and comparison of the two stability procedures are performed in \cref{sec4}.
\section{Preliminaries}\label{sec2}
\subsection{Graph theory}\label{2a}
Graph theory is widely used to study multi-agent systems. A network of agents and the underlying communication topology can be represented by a graph $\mathcal{G}=(\mathcal{V},\mathsf{E}, \mathcal{A})$. If the communication among agents could be unidirectional, a directed graph is used to describe the multi-agent network. The vertex set $ \mathcal{V} = \{\mathsf{v}_1,\mathsf{v}_2,....,\mathsf{v}_n\}$ where vertices are analogous to agents and an edge set $\mathsf{E} = \{ (i,j):i,j\in \mathcal{V}\}$ where edges are analogous to the branches of directed network with $(i,j)$ representing information flowing from $j^{th}$ vertex to $i^{th}$. Edge set has distinct ordered pairs of vertices which depict existence and direction of information flow among the vertices. An adjacency matrix $\mathcal{A} = (a_{ij} )_{n \times n}$ also represents communication topology with $a_{ij} = 1$ if $(i,j) \in \mathsf{E}$ and $a_{ij} = 0$ otherwise. A weighted adjacency matrix will have entries other than zero and unity weights depending on the assumptions of cost of communication. If there exists at least one vertex which has a directed path to all the other vertices, the graph is said to form a spanning tree and if all the vertices have directed paths to all the other agents, it is called strongly connected. A spanning tree condition is a necessary condition for consensus but not sufficient when time-delays and higher order systems are involved \cite{Cao2013,Wang2016}. The sum of weights of inward branches at a vertex is called in-degree $d_{in}(v_i)$ and the weight sum of outward branches is called out-degree of the vertex $d_{out}(v_i)$.
\subsection{Notations}\label{sec2b}
The following notations are used throughout the paper, $\mathbb{R}^{n}$ represents an $n$-dimensional Euclidean space. $\mathbb{R}^{m \times n}$ represent a space of $m \times n$ matrices. Position and velocity of $n$ agents are represented by $\boldsymbol{x}=[x_1\ x_2\ ... \ x_n]^{T}$ and $\boldsymbol{\dot{x}}=\boldsymbol{v}$ respectively. $\boldsymbol{X}=[X_1\ X_2\ ...\ X_n\ X_{n+1}\ ...\ X_{2n}]^{T}$ represent the states of a multi-agent system with $[X_1\ X_2\ ...\ X_n]^T = \boldsymbol{x}$ and $[X_{n+1}\ ...\ X_{2n}]^T =\boldsymbol{\dot{x}}$. $I_n$ and $I_{2n}$ represent identity matrices of sizes $n \times n$ and $2n \times 2n$ respectively. $\bold{1}_n$ is a vector ones of size $1 \times n$. For $\{A,B\} \in \mathbb{R}^{n \times n}$, if $A \succcurlyeq B$, then $A-B$ is positive semidefinite; if $A \succ B$, then $A-B$ is positive definite. $D$ represents a matrix with diagonal elements as row-sum of adjacency matrix $\mathcal{A}$ and rest of the elements as zero. A matrix $\tilde{\mathcal{A}}$ is defined with elements $\tilde{a}_{ij} = \dfrac{a_{ij}}{\sum_{j=1}^n a_{ij}}$ and $[\tilde{\lambda}_1, \tilde{\lambda}_2, ... , \tilde{\lambda}_n]$ are the eigenvalues of matrix $\tilde{\mathcal{A}}$.

\section{System model and analysis}\label{sec3}
Consider a multi-agent network of homogeneous second order agents with $i^{th}$ agent dynamics given in \cref{eqn1},
\begin{equation}\label{eqn1}
\begin{array}{lll}
\dot{x}_i\left(t\right) &=& sat\left(\hat{v}_i\left(t\right)\right) \\
\dot{\hat{v}}_i\left(t\right) &=& u_i \left(t\right)
\end{array}
\end{equation}
For mobile agents, the position of an agent is represented by $x_i\left(t\right)$ and the velocity by $\dot{x}_i\left(t\right)$. Various control protocols used in the analysis are given in \cref{eqn1c,eqn1d,eqn1e,eqn1f}. It is assumed that $\forall t \in (-\infty,0]$, $x_i\left(t\right) = x\left(0\right)$ and $\hat{v}_i\left(t\right)=0$.  Saturation nonlinearity used in the system is defined in \cref{satu} with $\pm \Delta$ as bounds.

\begin{equation}\label{satu}
  sat\left(\alpha\right)=
  \begin{cases}
    -\Delta, & \text{if}\ \alpha \leq -\Delta \\
    \alpha, & \text{if}\ -\Delta < \alpha < \Delta \\
    \Delta, & \text{if}\ \alpha \geq \Delta
  \end{cases}
\end{equation}
\begin{figure}[!h]
\centering
\begin{tikzpicture}[auto, node distance=2cm,>=latex']
    \node [block] (int1) {$\mathlarger\int$};
    \node[satnode,minimum height=1cm,minimum width=1.5cm,right of=int1,fill=blue!5,draw,node distance=2cm] (sat) {};
    \node [block, right of=sat, node distance=2cm] (int2) {$\mathlarger\int$};
    \node [coordinate, right of=int2] (output) {};
    \node [block, below left of=sat,align=center] (comm) {Communication \\ \& process };
    \node [coordinate,left of=int1] (fb) {};
    \draw [thick,->] (int1) -- node[name=v1] {$\hat{v}_i$} (sat);
    \draw [thick,->] (sat) -- node[name=v2] {$v_i$} (int2);
    \draw [thick,->] (comm.320) -- node [near end, right, name=x1] {$(a_{ji})(x_i,v_i)$} ++(0cm,-1cm);
    \draw [thick,<-] (comm.220) -- node [near end, left, name=x2] {$(a_{ij})(x_j,v_j)$} ++(0cm,-1cm);    
    \draw [thick,->] (int2) -- node [name=x] {$x_i$}(output);
    \draw [thick,->] (x) |- (comm.350);
    \draw [thick,->] (v2) |- (comm.10);
    \draw [thick,->] (comm) -| (fb) -- node [near start] {$u_i$} (int1);
\end{tikzpicture}
\caption{Block diagram of $i^{th}$ agent.}
\label{bd1}
\end{figure}
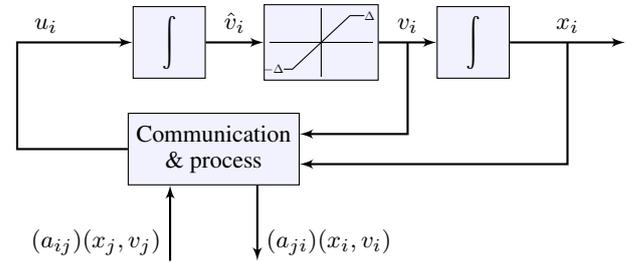
The $i^{th}$ agent dynamics are depicted using a block diagram given in \cref{bd1}. Using the concepts of describing function to estimate limit cycles \cite{slotine_chap}, the system can be approximately transformed as shown in  \cref{bd2}. Since a single-valued nonlinearity is considered, its approximate describing function for the saturation is given in \cref{df} \cite{slotine_chap}, 
\begin{equation}\label{df}
N(A) = \frac{2}{\pi}\left[\arcsin\left(\frac{\Delta}{A}\right) + \frac{\Delta}{A}\sqrt{1-\frac{\Delta^2}{A^2}}\right]
\end{equation}
where, the limit cycles' amplitude is represented by $A$. 

The describing function is real valued and $-1/N(A) \in [-1,\infty)$, it can be estimated that the limit cycles are stable when the transfer function of linear element in \cref{bd2} encircles $(-1,0)$ in a complex plane. In other words, limit cycles are exhibited when the linear element is unstable in the multi-agent system. Stability analysis of the linear element is performed using Lyapunov-Krasovskii approach in \cref{lyap} and Nyquist stability approach given in \cref{freq}.

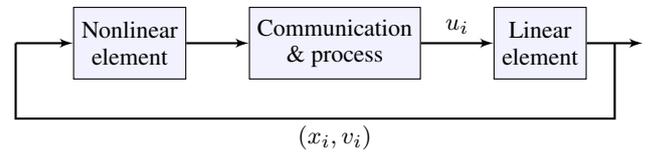
\begin{figure}[!h]
\centering
\begin{tikzpicture}[auto, node distance=2cm,>=latex']
    \node [coordinate, name=input] {};
    \node [block, right of=input,node distance=1.5cm, align=center] (nonlin) {Nonlinear \\ element};
    \node [block, right of=nonlin,node distance=2.7cm, align=center] (comm) {Communication \\ \& process};
    \node [block, right of=comm,node distance=2.7cm, align=center] (lin) {Linear \\ element};
    \node [coordinate, right of=lin,node distance=1.35cm] (output) {};
    \node [coordinate,below of=comm,node distance=1cm] (fb) {};
    \node [coordinate,above of=nonlin,node distance=1cm] (ab) {};
    \node [below of=fb,node distance=0.25cm] (fb2) {$(x_i, v_i)$};
    \draw [thick,->] (nonlin) -- node {} (comm);
    \draw [thick,->] (comm) -- node {$u_i$} (lin);
    \draw [thick,->] (input) -- (nonlin);
    \draw [thick,->] (lin) -- node [name=y] {}(output);
    \draw [thick,-] (y) |- (fb) -| node [near end] {} (input);
\end{tikzpicture}
\caption{Rearranged block diagram of $i^{th}$ agent.}
\label{bd2}
\end{figure}
The approximate linear element given in \cref{bd2} is represented by \cref{eqn1b},
\begin{equation}\label{eqn1b}
\begin{array}{lll}
\dot{x}_i\left(t\right) &=& v_i\left(t\right) \\
\dot{v}_i\left(t\right) &=& u_i \left(t\right)
\end{array}
\end{equation}
Various control laws considered from the literature for analysis are given in \cref{eqn1c,eqn1d,eqn1e,eqn1f}, 
\begin{equation}\label{eqn1c}
\begin{split}
u_{i1}\left(t\right)& = -v_i\left(t-\tau_1\right) \\& +\frac{1}{\sum_{j=1}^n a_{ij}} \sum_{j=1}^n \big[ a_{ij}\left(x_j\left(t-\tau_2\right) - x_i\left(t-\tau_1\right)\right)\big]
\end{split}
\end{equation}
\begin{equation}\label{eqn1d}
\begin{split}
u_{i2} = \frac{1}{\sum_{j=1}^n a_{ij}}  \sum_{j=1}^n &\big[ a_{ij}\left(v_j\left(t-\tau_2\right) - v_i\left(t-\tau_1\right)\right) \\&+ a_{ij}\left(x_j\left(t-\tau_2\right) - x_i\left(t-\tau_1\right)\right) \big]
\end{split}
\end{equation}
\begin{equation}\label{eqn1e}
\begin{split}
u_{i3}\left(t\right) = & -v_i \left(t- \tau_1\right) \\& + \sum_{j=1}^n \big[ a_{ij}\left(x_j\left(t-\tau_2\right) - x_i\left(t-\tau_1\right)\right)\big]
\end{split}
\end{equation}
\begin{equation}\label{eqn1f}
\begin{split}
u_{i4} = \sum_{j=1}^n \big[ a_{ij} & \left(v_j\left(t-\tau_2\right) - v_i\left(t-\tau_1\right)\right) \\&+ a_{ij}\left(x_j\left(t-\tau_2\right) - x_i\left(t-\tau_1\right)\right) \big]
\end{split}
\end{equation}
where $\tau_1$ and $\tau_2$ represent input and communication time-delays respectively. With any of the control laws in \cref{eqn1c,eqn1d,eqn1e,eqn1f}, consensus is said to be reached if $(x_i(t)-x_j(t)) \rightarrow 0$ and $(\dot{x}_i(t)-\dot{x}_j(t)) \rightarrow 0$ $\forall \{i,j\} \in [1,\;n]$. Control laws in \cref{eqn1c,eqn1d} generate lesser magnitude of control input $u_i$ which result in slightly larger convergence time compared to the ones in \cref{eqn1e,eqn1f}. The averaging in control laws given by \cref{eqn1c,eqn1d} have better time-delay tolerance due to smaller Fiedler eigenvalue compared to control laws in \cref{eqn1e,eqn1f} at the expense of convergence time. With control laws in \cref{eqn1c,eqn1e}, the state $\dot{x}_i(t) \rightarrow 0$ when the consensus is achieved since they do not consider difference in velocity. State $\dot{x}_i(t) \rightarrow 0$ is not guaranteed with control laws in \cref{eqn1d,eqn1f}. 
\subsection{Lyapunov-Krasovskii approach}\label{lyap}

Consider the linear element represented in \cref{eqn1b,eqn1c,eqn1d,eqn1e,eqn1f}, which can be represented as given in \cref{eqn_l1}. 
\begin{equation}\label{eqn_l1}
\boldsymbol{\dot{X}}\left(t\right) = \mathcal{A}_0\boldsymbol{X}\left(t\right) + \mathcal{A}_1\boldsymbol{X}\left(t-\tau_{1}\right) +\mathcal{A}_2\boldsymbol{X}\left(t-\tau_{2}\right)
\end{equation}
Where $\mathcal{A}_0$, $\mathcal{A}_1$ and $\mathcal{A}_2$ are as given in \cref{eqn_l2,eqn_l3,eqn_l4,eqn_l5}. \\
For $u_{i1}$ given in \cref{eqn1c},
\begin{equation}\label{eqn_l2}
\begin{split}
& \mathcal{A}_0 = \left[\begin{matrix}
0_{n \times n} & I_n \\
0_{n \times n} & 0_{n \times n}
\end{matrix} \right];\, \mathcal{A}_1=\left[\begin{matrix}
0_{n \times n} & 0_{n \times n} \\
-I_n & -I_n
\end{matrix} \right];\,\\& \mathcal{A}_2=\left[\begin{matrix}
0_{n \times n} & 0_{n \times n} \\
\widetilde{\mathcal{A}} & 0_{n \times n}
\end{matrix} \right]
\end{split}
\end{equation}
For $u_{i2}$ given in \cref{eqn1d}, 
\begin{equation}\label{eqn_l3}
\begin{split}
& \mathcal{A}_0 = \left[\begin{matrix}
0_{n \times n} & I_n \\
0_{n \times n} & 0_{n \times n}
\end{matrix} \right];\, \mathcal{A}_1=\left[\begin{matrix}
0_{n \times n} & 0_{n \times n} \\
-I_n & -I_n
\end{matrix} \right];\,\\& \mathcal{A}_2=\left[\begin{matrix}
0_{n \times n} & 0_{n \times n} \\
\widetilde{\mathcal{A}} & \widetilde{\mathcal{A}}
\end{matrix} \right]
\end{split}
\end{equation}
For $u_{i3}$ given in \cref{eqn1e}, 
\begin{equation}\label{eqn_l4}
\begin{split}
& \mathcal{A}_0 = \left[\begin{matrix}
0_{n \times n} & I_n \\
0_{n \times n} & 0_{n \times n}
\end{matrix} \right];\, \mathcal{A}_1=\left[\begin{matrix}
0_{n \times n} & 0_{n \times n} \\
-D & -I_n
\end{matrix} \right];\,\\& \mathcal{A}_2=\left[\begin{matrix}
0_{n \times n} & 0_{n \times n} \\
\mathcal{A} & 0_{n \times n}
\end{matrix} \right]
\end{split}
\end{equation}
For $u_{i4}$ given in \cref{eqn1f}, 
\begin{equation}\label{eqn_l5}
\begin{split}
& \mathcal{A}_0 = \left[\begin{matrix}
0_{n \times n} & I_n \\
0_{n \times n} & 0_{n \times n}
\end{matrix} \right];\, \mathcal{A}_1=\left[\begin{matrix}
0_{n \times n} & 0_{n \times n} \\
-D & -D
\end{matrix} \right];\,\\& \mathcal{A}_2=\left[\begin{matrix}
0_{n \times n} & 0_{n \times n} \\
\mathcal{A} & \mathcal{A}
\end{matrix} \right]
\end{split}
\end{equation}
Some definitions and lemmas analogous to the ones in \cite{Lin2008} are given below,
\begin{definition}
\emph{Balanced graph}: A graph is said to be balanced if in-degree equals to out-degree for all vertices in the graph, $d_{in}(v_i)=d_{out}(v_i),\, \forall i \in [1,n]$.
\end{definition}
\begin{definition}
\emph{$k$-regular graph}: It is a balanced graph with all the vertices having in-degree and out-degree equal to $k$, $d_{in}(v_i)=d_{out}(v_i)=k,\, \forall i \in [1,n]$
\end{definition}
Control laws given in \cref{eqn1c,eqn1d} make the multi-agent system behave like a system connected by 1-regular graph.
\begin{lemma}\label{lem1}
Consider $\Phi_{01} = \frac{1}{n}\left[\begin{matrix}
1_{n \times n} & 0_{n \times n} \\
0_{n \times n} & 1_{n \times n}
\end{matrix} \right]$ and $\mathcal{E} = I_{2n}-\Phi_{01}$, then the following statements hold true:
\begin{enumerate}
	\item A multi-agent system with $k$-regular graph communication topology and with inputs in \cref{eqn1c,eqn1d,eqn1e,eqn1f} produces balanced matrices $\mathcal{E}\left(\mathcal{A}_0 + \mathcal{A}_1 + \mathcal{A}_2\right)$, $\mathcal{E}\mathcal{A}_0$, $\mathcal{E}\mathcal{A}_1$ , $\mathcal{E}\mathcal{A}_2$ and $\mathcal{E}\left(\mathcal{A}_1 + \mathcal{A}_2\right)$ with maximum rank $2n-2$ and eigenvalues $0$ of multiplicity atleast two.
	\item A multi-agent system with a spanning tree in communication topology and with inputs in \cref{eqn1c,eqn1d} produces balanced matrices $\mathcal{E}\left(\mathcal{A}_0 + \mathcal{A}_1 + \mathcal{A}_2\right)$, $\mathcal{E}\mathcal{A}_0$, $\mathcal{E}\mathcal{A}_1$ , $\mathcal{E}\mathcal{A}_2$ and $\mathcal{E}\left(\mathcal{A}_1 + \mathcal{A}_2\right)$ with maximum rank $2n-2$ and eigenvalues $0$ of multiplicity atleast two.
\end{enumerate}
\end{lemma}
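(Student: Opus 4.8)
The plan is to reduce the three assertions to properties of the projector $\mathcal E$ together with the row- and column-sum structure of the adjacency data, so that the case analysis over the four control laws becomes short and uniform. Write $J=\tfrac1n\,1_{n\times n}\in\mathbb R^{n\times n}$ for the averaging matrix, so that $\Phi_{01}=\mathrm{diag}(J,J)$ and $\mathcal E=\mathrm{diag}(I_n-J,\,I_n-J)$. First I would record the elementary identities $J^{2}=J$, $J\mathbf 1_n^{T}=\mathbf 1_n^{T}$ and $\mathbf 1_n J=\mathbf 1_n$, whence $(I_n-J)\mathbf 1_n^{T}=0$ and $\mathbf 1_n(I_n-J)=0$. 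Consequently $\mathcal E$ is an orthogonal projector with $\operatorname{trace}\mathcal E=2n-2$, so $\operatorname{rank}\mathcal E=2n-2$; its kernel (the consensus directions) is spanned by $p:=[\mathbf 1_n^{T};\,0]$ and $q:=[0;\,\mathbf 1_n^{T}]$; and $p^{T}\mathcal E=q^{T}\mathcal E=0$. Throughout, ``balanced'' is read as: $p$ and $q$ --- and a fortiori $\mathbf 1_{2n}=p+q$ --- are both left and right null vectors, the natural lift of the balanced-graph notion to the $2n\times2n$ matrices.

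From this setup, two of the three properties hold for \emph{every} $2n\times2n$ matrix $\mathcal B$ with no case analysis at all. Since $\operatorname{rank}(\mathcal E\mathcal B)\le\operatorname{rank}\mathcal E=2n-2$, each $\mathcal E\mathcal B$ has rank at most $2n-2$, hence a kernel of dimension at least two, hence $0$ as an eigenvalue of geometric --- and therefore algebraic --- multiplicity at least two. And $p^{T}(\mathcal E\mathcal B)=(p^{T}\mathcal E)\mathcal B=0$, likewise $q^{T}(\mathcal E\mathcal B)=0$, so $p,q$ are left null vectors automatically. Hence, to finish it suffices to show that $p$ and $q$ are also right null vectors of $\mathcal E\mathcal A_0$, $\mathcal E\mathcal A_1$, $\mathcal E\mathcal A_2$ (the two summed matrices then follow by linearity), i.e.\ that $\mathcal A_i p,\mathcal A_i q\in\ker\mathcal E=\operatorname{span}\{p,q\}$ for $i=0,1,2$.

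These right-kernel checks I would carry out block by block from \cref{eqn_l2,eqn_l3,eqn_l4,eqn_l5}, invoking the graph hypothesis precisely where needed. For statement (1), $k$-regularity gives $D=kI_n$, $\widetilde{\mathcal A}=k^{-1}\mathcal A$ and $\mathcal A\mathbf 1_n^{T}=k\mathbf 1_n^{T}$; using these and reading $\mathcal A_0,\mathcal A_1,\mathcal A_2$ off \cref{eqn_l2,eqn_l3,eqn_l4,eqn_l5}, one finds $\mathcal A_0p=0$, $\mathcal A_0q=p$, and $\mathcal A_1p,\mathcal A_1q,\mathcal A_2p,\mathcal A_2q\in\operatorname{span}\{q\}$ for each of \cref{eqn1c,eqn1d,eqn1e,eqn1f}, so all five matrices have $p$ and $q$ in their right kernel. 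For statement (2) only \cref{eqn1c,eqn1d} occur; these involve $\widetilde{\mathcal A}$ but not $D$, and since $\widetilde{\mathcal A}$ is row-stochastic one has $\widetilde{\mathcal A}\mathbf 1_n^{T}=\mathbf 1_n^{T}$ on any graph where it is defined --- a spanning tree (with the standing assumption $\sum_j a_{ij}\neq0$) guarantees this --- so the same block computations go through without regularity. I would also note the convenient fact that in all four cases $\mathcal A_1+\mathcal A_2$ has bottom-left block $\mathcal A-D$ with $(\mathcal A-D)\mathbf 1_n^{T}=0$, so $\mathcal E(\mathcal A_1+\mathcal A_2)$ and $\mathcal E(\mathcal A_0+\mathcal A_1+\mathcal A_2)$ are balanced under either hypothesis essentially for free.

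I expect the only real obstacle to be bookkeeping rather than depth: with four control laws and five matrices one must keep straight which identity each block needs --- the stochasticity $\widetilde{\mathcal A}\mathbf 1_n^{T}=\mathbf 1_n^{T}$, the regularity $D=kI_n$ and $\mathcal A\mathbf 1_n^{T}=k\mathbf 1_n^{T}$, or the Laplacian-type identity $(\mathcal A-D)\mathbf 1_n^{T}=0$. Organising the proof as above (rank and left-kernel claims dispatched once for a generic $\mathcal B$; right-kernel claims reduced to ``the bottom-row blocks send $\mathbf 1_n^{T}$ to a multiple of $\mathbf 1_n^{T}$'') keeps the case work to a few lines each. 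The one point deserving explicit care is the well-definedness of $\widetilde{\mathcal A}$, which needs every vertex to have an in-neighbour; this is exactly what the $k$-regular and spanning-tree hypotheses (together with $\sum_j a_{ij}\neq 0$) provide.
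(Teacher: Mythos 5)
The paper itself offers no proof of \cref{lem1}: it is stated without argument, introduced only as ``analogous to the ones in \cite{Lin2008}.'' Your proposal therefore cannot be matched against an in-paper proof, but it is correct and fills the gap cleanly. The decomposition is the right one: dispatching the rank bound, the multiplicity of the zero eigenvalue, and the left-kernel condition once for a generic $\mathcal{B}$ via $\operatorname{rank}(\mathcal{E}\mathcal{B})\le\operatorname{rank}\mathcal{E}=2n-2$ and $p^{T}\mathcal{E}=q^{T}\mathcal{E}=0$, then reducing the right-kernel condition to ``the bottom-row blocks send $\mathbf 1_n^{T}$ into $\operatorname{span}\{\mathbf 1_n^{T}\}$,'' is exactly what the block structure of \cref{eqn_l2,eqn_l3,eqn_l4,eqn_l5} calls for, and your use of row-stochasticity of $\widetilde{\mathcal A}$ versus $D=kI_n$, $\mathcal A\mathbf 1_n^{T}=k\mathbf 1_n^{T}$ correctly isolates why statement (2) holds for \cref{eqn1c,eqn1d} only. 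Note that you actually prove something slightly stronger than the paper's Definition~3 of ``balanced'' (which only requires $\mathbf 1_{2n}=p+q$ in both kernels, not $p$ and $q$ separately); this is harmless and is in fact the form needed downstream in \cref{lem2}. Two small points of care: your parenthetical that $\mathcal A_1+\mathcal A_2$ has bottom-left block $\mathcal A-D$ ``in all four cases'' is literally true only for \cref{eqn1e,eqn1f} --- for \cref{eqn1c,eqn1d} the block is $\widetilde{\mathcal A}-I_n$, though the relevant identity (zero row sums) holds either way; and your caveat about well-definedness of $\widetilde{\mathcal A}$ is apt, since a spanning tree alone does not force the root to have an in-neighbour, so the hypothesis $\sum_j a_{ij}\neq 0$ for every $i$ must indeed be read as standing (as it implicitly is throughout the paper, since the control laws \cref{eqn1c,eqn1d} divide by it).
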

\begin{definition}
\emph{Balanced matrix}: A square matrix $\mathcal{M} \in R^{n \times n}$ is said to be balanced iff $\mathcal{M} \bold{1}_n^T = 0$ and $\bold{1}_n \mathcal{M} = 0$.
\end{definition}
\begin{lemma} \label{lem2}
Consider $\Phi_{01} = \frac{1}{n}\left[\begin{matrix}
1_{n \times n} & 0_{n \times n} \\
0_{n \times n} & 1_{n \times n}
\end{matrix} \right]$ and $\mathcal{E} = I_{2n}-\Phi_{01}$, then the following statements hold true for $k$-regular graph with inputs in \cref{eqn1c,eqn1d,eqn1e,eqn1f} and for spanning tree graph with inputs in \cref{eqn1c,eqn1d}:
\begin{enumerate}
\item $\mathcal{E}\left(\mathcal{A}_0 + \mathcal{A}_1 + \mathcal{A}_2\right)$ is a balanced matrix with rank $2n-2$ and eigenvalues $0$ of multiplicity $2$.
\item Matrices $\mathcal{E}\mathcal{A}_0$, $\mathcal{E}\mathcal{A}_1$ , $\mathcal{E}\mathcal{A}_2$ and $\mathcal{E}\left(\mathcal{A}_1 + \mathcal{A}_2\right)$ are all balanced with eigenvalues $0$ of multiplicity atleast $2$.
\item There is a matrix $U$, an orthogonal matrix of eigenvectors of $\mathcal{E}$ which satisfies, \\
$U^T\mathcal{E}U= \left[\begin{matrix}
\widetilde{\mathcal{E}}_{\left(2n-2\right) \times 2} & 0_{\left(2n-2\right) \times 2} \\
0_{2 \times \left(2n-2\right)} & 0_{2 \times 2}
\end{matrix} \right]$
\item Let $\mathcal{E}\mathcal{A}_0 = \mathcal{F}_0$, $\mathcal{E}\mathcal{A}_1 = \mathcal{F}_1$ and $\mathcal{E}\mathcal{A}_2 = \mathcal{F}_2$. $\mathcal{E}$, $\mathcal{F}_0$, $\mathcal{F}_1$ and $\mathcal{F}_2$ have maximum rank $2n-2$ and with zero row sums, then,\\ $U^T\mathcal{F}_iU = \left[\begin{matrix} \widetilde{\mathcal{F}}_{i_{(2n-2) \times (2n-2)}} & 0_{_{(2n-2) \times 2}} \\
0_{_{2 \times (2n-2)}} & 0_{_{2 \times 2}}
\end{matrix}\right]$, $i \in [0,2]$.
\item Also, for cases of $\tau_1=\tau_2>0$ and $\tau_1=\tau_2=0$,   \\
$U^T\mathcal{E}\left(\mathcal{A}_1 + \mathcal{A}_2\right)U = \left[\begin{matrix}
(\widetilde{\mathcal{F}}_{1}+\widetilde{\mathcal{F}}_{2})_{_{(2n-2) \times (2n-2)}} & 0_{_{\left(2n-2\right) \times 2}} \\
0_{_{2 \times \left(2n-2\right)}} & 0_{_{2 \times 2}}
\end{matrix} \right]$ \\
$U^T\mathcal{E}\left(\mathcal{A}_0 + \mathcal{A}_1 + \mathcal{A}_2\right)U = \left[\begin{matrix}
(\widetilde{\mathcal{F}}_{0}+\widetilde{\mathcal{F}}_{1}+\widetilde{\mathcal{F}}_{2})& 0_{_{\left(2n-2\right) \times 2}} \\
0_{_{2 \times \left(2n-2\right)}} & 0_{_{2 \times 2}}
\end{matrix} \right]$
\end{enumerate}
\end{lemma}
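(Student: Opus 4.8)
The plan is to derive all five parts from three structural facts. (a) $\Phi_{01}$ and $\mathcal{E}$ are complementary orthogonal projections on $\mathbb{R}^{2n}$ (since $\Phi_{01}^T=\Phi_{01}=\Phi_{01}^2$ and $\mathrm{rank}\,\Phi_{01}=2$), so $\mathrm{range}\,\mathcal{E}$ has dimension $2n-2$ and $\ker\mathcal{E}=\mathrm{range}\,\Phi_{01}=\mathrm{span}\{[\mathbf{1}_n^T;0],\,[0;\mathbf{1}_n^T]\}$. (b) $\mathbf{1}_{2n}\mathcal{E}=0$, so $\mathcal{E}\mathcal{M}$ has zero column sums for every $\mathcal{M}$. (c) Each $\mathcal{A}_i$ of \cref{eqn_l2,eqn_l3,eqn_l4,eqn_l5} leaves $\ker\mathcal{E}$ invariant, because in all four control laws the blocks of $\mathcal{A}_i$ have constant row sums ($\widetilde{\mathcal{A}}$ is row-stochastic, and $D=kI_n$, $\mathcal{A}\mathbf{1}_n^T=k\mathbf{1}_n^T$ in the $k$-regular case). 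Fact (a) already gives statement~3: $\mathcal{E}$ is symmetric, so there is an orthogonal $U=[U_1\ U_2]$ with $U_1$ an orthonormal basis of $\mathrm{range}\,\mathcal{E}$ and the two columns of $U_2$ being $\tfrac{1}{\sqrt{n}}[\mathbf{1}_n^T;0]$ and $\tfrac{1}{\sqrt{n}}[0;\mathbf{1}_n^T]$, a basis of $\ker\mathcal{E}$; then $U^T\mathcal{E}U=\mathrm{diag}(I_{2n-2},\,0_{2\times2})$, identifying $\widetilde{\mathcal{E}}$ with $I_{2n-2}$.

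For statement~2 I would compute $\mathcal{E}\mathcal{A}_i=\bigl[\begin{smallmatrix}J&0\\0&J\end{smallmatrix}\bigr]\mathcal{A}_i$ with $J:=I_n-\tfrac{1}{n}\mathbf{1}_{n\times n}$ for each control law; the outcome is always of the form $\bigl[\begin{smallmatrix}0&\star\\0&0\end{smallmatrix}\bigr]$ or $\bigl[\begin{smallmatrix}0&0\\ \star&\star\end{smallmatrix}\bigr]$ in which every nonzero block is $J$ times a matrix, so $\mathrm{rank}(\mathcal{E}\mathcal{A}_i)\le\mathrm{rank}\,J=n-1$ and the nullity is $\ge n+1\ge 2$; hence eigenvalue $0$ has multiplicity at least two. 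Balancedness is already asserted in \cref{lem1}; alternatively it is fact (b) (zero column sums) together with $\mathcal{A}_i\mathbf{1}_{2n}^T\in\ker\mathcal{E}$ from fact (c) (zero row sums). The same computation applies verbatim to $\mathcal{E}(\mathcal{A}_1+\mathcal{A}_2)$.

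Statement~1 is the main work. It suffices to show $\ker\!\bigl(\mathcal{E}(\mathcal{A}_0+\mathcal{A}_1+\mathcal{A}_2)\bigr)=\ker\mathcal{E}$ (the inclusion $\supseteq$ is fact (c), which also gives rank $\le 2n-2$). A vector $[p;q]$ is in the left kernel iff $(\mathcal{A}_0+\mathcal{A}_1+\mathcal{A}_2)[p;q]\in\ker\mathcal{E}$; writing out the block form, e.g.\ for \cref{eqn1c} this reads $q=a\mathbf{1}_n^T$ and $(\widetilde{\mathcal{A}}-I_n)p=c\,\mathbf{1}_n^T$ for scalars $a,c$, and for \cref{eqn1e} the same with the (negative) Laplacian $\mathcal{A}-D$ in place of $\widetilde{\mathcal{A}}-I_n$. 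The spanning-tree hypothesis makes this reduced-Laplacian-type matrix have a \emph{simple} zero eigenvalue with right null space $\mathrm{span}\{\mathbf{1}_n^T\}$ and a left null vector with nonzero entry-sum (nonnegative and nonzero by Perron--Frobenius for digraphs with a spanning tree; equal to $\mathbf{1}_n$ in the $k$-regular case, the graph then being balanced). The solvability condition then forces $c=0$, hence $p\in\mathrm{span}\{\mathbf{1}_n^T\}$ and $[p;q]\in\ker\mathcal{E}$; so the kernel is exactly $\ker\mathcal{E}$ and the rank is $2n-2$. That the \emph{algebraic} multiplicity of $0$ is exactly $2$ follows because $\mathrm{range}\,\mathcal{E}$ and $\ker\mathcal{E}$ are both invariant under $\mathcal{E}(\mathcal{A}_0+\mathcal{A}_1+\mathcal{A}_2)$ (the second by fact (c)), the operator is injective on $\mathrm{range}\,\mathcal{E}$ and zero on the $2$-dimensional $\ker\mathcal{E}$, so its characteristic polynomial is $\lambda^2$ times a degree-$(2n-2)$ polynomial with nonzero constant term.

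For statement~4, with $U=[U_1\ U_2]$ as above: $U_2^T\mathcal{E}=(\mathcal{E}U_2)^T=0$ annihilates the bottom two rows of $U^T\mathcal{F}_iU$, while $\mathcal{F}_iU_2=\mathcal{E}\mathcal{A}_iU_2=0$ annihilates the right two columns, since the columns of $U_2$ span $\ker\mathcal{E}$ and $\mathcal{A}_i$ preserves $\ker\mathcal{E}$ by fact (c); what survives is $\widetilde{\mathcal{F}}_i:=U_1^T\mathcal{F}_iU_1$ in the leading $(2n-2)\times(2n-2)$ block. Statement~5 is then immediate from linearity: $\mathcal{E}(\mathcal{A}_1+\mathcal{A}_2)=\mathcal{F}_1+\mathcal{F}_2$ and $\mathcal{E}(\mathcal{A}_0+\mathcal{A}_1+\mathcal{A}_2)=\mathcal{F}_0+\mathcal{F}_1+\mathcal{F}_2$, and conjugation by $U$ distributes over these sums, putting $\widetilde{\mathcal{F}}_1+\widetilde{\mathcal{F}}_2$ and $\widetilde{\mathcal{F}}_0+\widetilde{\mathcal{F}}_1+\widetilde{\mathcal{F}}_2$ in the leading block. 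The main obstacle is the argument in statement~1 that pins the nullity at exactly $2$, which is the only place the spanning-tree/Perron--Frobenius structure is genuinely needed; everything else is block bookkeeping around the projection $\mathcal{E}$.
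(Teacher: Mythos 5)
The paper never actually proves \cref{lem2}: it is stated without proof, introduced only as ``analogous to'' results of Lin and Jia on average consensus, so there is no in-paper argument to compare against line by line. Judged on its own, your proof is correct and supplies what the paper omits. The three structural facts you isolate --- $\Phi_{01}$ and $\mathcal{E}$ are complementary orthogonal projections, $\mathbf{1}_{2n}\mathcal{E}=0$, and every $\mathcal{A}_i$ maps $\ker\mathcal{E}=\mathrm{span}\{[\mathbf{1}_n^T;0],[0;\mathbf{1}_n^T]\}$ into itself --- do carry statements 2--5 exactly as you describe, and your treatment of statement 1 is the right one: the left Perron vector $w\geq 0$ of the (normalized) Laplacian with $w\mathbf{1}_n^T\neq 0$ kills the constant $c$ in $(\widetilde{\mathcal{A}}-I_n)p=c\,\mathbf{1}_n^T$, simplicity of the zero eigenvalue then forces $p\in\mathrm{span}\{\mathbf{1}_n^T\}$, and the invariant-subspace splitting $\mathbb{R}^{2n}=\mathrm{range}\,\mathcal{E}\oplus\ker\mathcal{E}$ correctly upgrades the nullity count to algebraic multiplicity exactly $2$. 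Two caveats are worth recording, both about the lemma rather than your argument. First, for the $k$-regular case with inputs \cref{eqn1e,eqn1f} your statement-1 step still needs the graph to contain a spanning tree so that $\mathcal{A}-kI_n$ has a simple zero eigenvalue; a disconnected $k$-regular graph satisfies the lemma's stated hypotheses but violates conclusion 1, so connectivity must be added to the hypotheses --- your own wording already treats the spanning tree as the operative assumption, which is the correct reading. Second, your identification $\widetilde{\mathcal{E}}=I_{2n-2}$ silently repairs the paper's typo in statement 3 (the block labelled $(2n-2)\times 2$ should be $(2n-2)\times(2n-2)$); that is the intended meaning.
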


Let the difference in position and velocity among the agents be assumed as error $\boldsymbol{\Psi}$, each element of $\boldsymbol{\Psi}$ is given by \cref{eqn_l5a}
\begin{equation}\label{eqn_l5a}
\Psi_i = \begin{cases}
\frac{1}{n} \sum\limits_{j=1}^n \left(X_i-X_j\right) & \forall i \in [1,\; n] \\ 
\frac{1}{n} \sum\limits_{j=n+1}^{2n} \left(X_i-X_j\right) & \forall i \in [n+1,\; 2n]
\end{cases}
\end{equation}
From the assumption in \cref{lem1}, 
\begin{equation}\label{eqn_l5b}
\boldsymbol{\Psi}=\mathcal{E}\boldsymbol{X}
\end{equation}
\begin{lemma}\label{lem3}
When error $\boldsymbol{\Psi} \rightarrow 0$, then $x_i \rightarrow x_j$ and $v_i \rightarrow v_j$. Conversely when $x_i \rightarrow x_j$ and $v_i \rightarrow v_j$, then $\boldsymbol{\Psi} \rightarrow 0$.
\end{lemma}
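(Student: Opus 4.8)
The plan is to extract the claim directly from the componentwise formula \cref{eqn_l5a} for $\boldsymbol{\Psi}$ together with the identity $\boldsymbol{\Psi}=\mathcal{E}\boldsymbol{X}$ in \cref{eqn_l5b}, where $\mathcal{E}=I_{2n}-\Phi_{01}$ is the blockwise averaging projector. Writing $x_i=X_i$ for $i\in[1,n]$ and $v_i=X_{n+i}$ for $i\in[1,n]$, and setting $\bar{x}(t)=\tfrac1n\sum_{j=1}^n x_j(t)$ and $\bar{v}(t)=\tfrac1n\sum_{j=1}^n v_j(t)$, the formula \cref{eqn_l5a} says precisely that $\Psi_i=x_i-\bar{x}$ for $i\in[1,n]$ and $\Psi_{n+i}=v_i-\bar{v}$ for $i\in[1,n]$. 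The one algebraic fact I would isolate first is that the common average cancels in any difference taken within a block, so
\[
x_i(t)-x_j(t)=\Psi_i(t)-\Psi_j(t),\qquad v_i(t)-v_j(t)=\Psi_{n+i}(t)-\Psi_{n+j}(t),
\]
for all $i,j\in[1,n]$ and all $t$.

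For the forward implication I would assume $\boldsymbol{\Psi}(t)\to 0$ as $t\to\infty$. Convergence of the vector is equivalent to convergence of each of its $2n$ entries, so $\Psi_k(t)\to 0$ for every $k$; substituting into the two displayed identities immediately yields $x_i(t)-x_j(t)\to 0$ and $v_i(t)-v_j(t)\to 0$ for all $i,j$, i.e. $x_i\to x_j$ and $v_i\to v_j$.

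For the converse I would assume $x_i(t)-x_j(t)\to 0$ and $v_i(t)-v_j(t)\to 0$ for all $i,j\in[1,n]$. Fixing $i\in[1,n]$,
\[
\Psi_i(t)=\frac1n\sum_{j=1}^n\bigl(x_i(t)-x_j(t)\bigr)\longrightarrow 0,
\]
since it is a finite average of terms each tending to $0$, and the velocity block $\Psi_{n+i}$ is treated identically. Hence every coordinate of $\boldsymbol{\Psi}$ tends to $0$, so $\boldsymbol{\Psi}(t)\to 0$.

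I do not expect any genuinely hard step: the whole content is the bookkeeping relating pairwise differences to deviation from the block average, plus the elementary fact that a finite sum of null sequences is null. The only points needing care are that all limits are understood as $t\to\infty$ and that convergence of the error vector is the same as convergence of each of its $2n$ entries. If a more structural phrasing is preferred, I would instead note that $\ker\mathcal{E}$ is exactly the consensus subspace $\{(\mathbf{1}_n^{T}c_1,\ \mathbf{1}_n^{T}c_2)^{T}:c_1,c_2\in\mathbb{R}\}$, so that $\boldsymbol{\Psi}=\mathcal{E}\boldsymbol{X}\to 0$ says precisely that $\boldsymbol{X}(t)$ approaches that subspace, which is equivalent to $x_i\to x_j$ and $v_i\to v_j$.
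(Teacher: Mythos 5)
Your proof is correct and follows essentially the same route as the paper: both rest on the observation that $\Psi_i$ is the deviation of $X_i$ from its block average, so pairwise differences of states equal pairwise differences of $\boldsymbol{\Psi}$-entries, and the converse follows from $\Psi_i$ being a finite average of the vanishing differences. The only cosmetic difference is that the paper extracts the forward direction by multiplying $\boldsymbol{\Psi}=\mathcal{E}\boldsymbol{X}$ with a cyclic difference matrix $\Gamma$ (yielding the consecutive differences $\Psi_i-\Psi_{i+1}=X_i-X_{i+1}$), whereas you cancel the common average directly for arbitrary pairs.
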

\begin{proof}
\normalfont
Consider a matrices,
\begin{equation}\label{eqn_l5c}
\gamma_{_{n \times n}} = \begin{bmatrix}1& -1& 0& ... & 0\\ 0 & 1& -1& ... & 0 \\ ... & ... & ... & ... & ... \\ ... & ... & ... & ... & ...\\ 0& ... & 0& 1& -1\\ -1& 0& ...& 0& 1 \end{bmatrix}
\end{equation}
\begin{equation}\label{eqn_l5d}
\Gamma_{_{2n \times 2n}} = \begin{bmatrix} \gamma_{_{n \times n}} & 0_{_{n \times n}} \\ 0_{_{n \times n}} & \gamma_{_{n \times n}} \end{bmatrix}
\end{equation}
Multiplying with $\Gamma$ on both sides of \cref{eqn_l5b},
\begin{equation}\label{eqn_l5e}
\begin{bmatrix}
\Psi_1 - \Psi_2 \\ \Psi_2 - \Psi_3 \\ .\\. \\ \Psi_n - \Psi_1 \\ \Psi_{n+1} - \Psi_{n+2} \\ \Psi_{n+2} - \Psi_{n+3} \\ .\\. \\ \Psi_{2n} - \Psi_{n+1}
\end{bmatrix} = \begin{bmatrix}
X_1 - X_2 \\ X_2 - X_3 \\ .\\. \\ X_n - X_1 \\ X_{n+1} - X_{n+2} \\ X_{n+2} - X_{n+3} \\ .\\. \\ X_{2n} - X_{n+1}
\end{bmatrix}
\end{equation}
When $\boldsymbol{\Psi} \rightarrow 0$, left side of \cref{eqn_l5e} becomes $0_{_{2n \times 1}}$. Which implies, $X_i \rightarrow X_j, \; \forall \{i,j\} \in [1,n]$ and $ X_i \rightarrow X_j, \; \forall \{i,j\} \in [n+1,2n]$.\\
From \cref{eqn_l5a}, when $x_i \rightarrow x_j$ and $v_i \rightarrow v_j$, then $\boldsymbol{\Psi} \rightarrow 0$. \QEDB
\end{proof} 
A control input is said to have solved the consensus problem in a globally asymptotic manner when $x_i \rightarrow x_j$ and $v_i \rightarrow v_j$, in other words, $\boldsymbol{\Psi} \rightarrow 0$. Stability of linear element with the control inputs estimates the existence of limit cycles in the system.
\begin{theorem}\label{th_lyp}
Consider the linear element in \cref{eqn1b} with time-delays $(\tau_1,\tau_2) \geq 0, \tau_1 \leq \tau_2 $. The control inputs for a $k$-regular graph given in \cref{eqn1c,eqn1d,eqn1e,eqn1f} and the control inputs for a spanning tree graph given in \cref{eqn1c,eqn1d} globally asymptotically solve consensus problem, if there exist matrices $\widetilde{P}>0$, $\widetilde{Q}_1>0$, $\widetilde{Q}_2>0$, $\widetilde{Z}_i>0$, $\forall i \in [1,3]$,  $\widetilde{\mathcal{F}}_{i}$ from \cref{lem2} $\forall i \in [1,3]$ and arbitrary matrices $\{\widetilde{H}_{ij}, \widetilde{H}_{ij}^T, \widetilde{I}_{ij}, \widetilde{I}_{ij}^T, \widetilde{J}_{ij}, \widetilde{J}_{ij}^T\} \; \forall i \in [1,3]$ $\forall j \in [1,4]$ of size $(2n-2) \times (2n-2)$ such that,
\begin{equation}\label{eqn_l6}
\left[\begin{matrix} \widetilde{G}_{11} & \widetilde{G}_{12} & \widetilde{G}_{13} \\ \widetilde{G}_{12}^T & \widetilde{G}_{22} & \widetilde{G}_{23} \\ \widetilde{G}_{13}^T & \widetilde{G}_{23}^T & \widetilde{G}_{33} \end{matrix}\right] \prec 0
\end{equation}
\begin{equation}\label{eqn_l7}
\left[\begin{matrix} \widetilde{H}_{11} & \widetilde{H}_{12} & \widetilde{H}_{13} & \widetilde{H}_{14}\\ \widetilde{H}_{12}^T & \widetilde{H}_{22} & \widetilde{H}_{23} & \widetilde{H}_{24}\\ \widetilde{H}_{13}^T & \widetilde{H}_{23}^T & \widetilde{H}_{33}& \widetilde{H}_{34} \\ \widetilde{H}_{14}^T & \widetilde{H}_{24}^T & \widetilde{H}_{34}^T & \widetilde{Z}_{1} \end{matrix}\right] \succcurlyeq 0 
\end{equation}
\begin{equation}\label{eqn_l8}
\left[\begin{matrix} \widetilde{I}_{11} & \widetilde{I}_{12} & \widetilde{I}_{13} & \widetilde{I}_{14}\\ \widetilde{I}_{12}^T & \widetilde{I}_{22} & \widetilde{I}_{23} & \widetilde{I}_{24}\\ \widetilde{I}_{13}^T & \widetilde{I}_{23}^T & \widetilde{I}_{33}& \widetilde{I}_{34} \\ \widetilde{I}_{14}^T & \widetilde{I}_{24}^T & \widetilde{I}_{34}^T & \widetilde{Z}_{2} \end{matrix}\right] \succcurlyeq 0 
\end{equation}
\begin{equation}\label{eqn_l9}
\left[\begin{matrix} \widetilde{J}_{11} & \widetilde{J}_{12} & \widetilde{J}_{13} & \widetilde{J}_{14}\\ \widetilde{J}_{12}^T & \widetilde{J}_{22} & \widetilde{J}_{23} & \widetilde{J}_{24}\\ \widetilde{J}_{13}^T & \widetilde{J}_{23}^T & \widetilde{J}_{33}& \widetilde{J}_{34} \\ \widetilde{J}_{14}^T & \widetilde{J}_{24}^T & \widetilde{J}_{34}^T & \widetilde{Z}_{3} \end{matrix}\right] \succcurlyeq 0 
\end{equation}
where, 
\begin{align}
\widetilde{G}_{11}=&\widetilde{\mathcal{F}}_0^T\widetilde{P}\widetilde{\mathcal{E}} + \widetilde{\mathcal{E}}^T\widetilde{P}\widetilde{\mathcal{F}}_0 + \widetilde{\mathcal{E}}^T\widetilde{Q}_{1}\widetilde{\mathcal{E}} + \widetilde{\mathcal{E}}^T\widetilde{Q}_{2}\widetilde{\mathcal{E}} +  \widetilde{\mathcal{F}}_0^T\widetilde{\Xi}\widetilde{\mathcal{F}}_0 \nonumber \\& + \tau_1\widetilde{\mathcal{E}}^T\widetilde{H}_{11}\widetilde{\mathcal{E}} + \widetilde{\mathcal{E}}^T\widetilde{H}_{14}\widetilde{\mathcal{E}} + \widetilde{\mathcal{E}}^T\widetilde{H}_{14}^T\widetilde{\mathcal{E}} +  \tau_2\widetilde{\mathcal{E}}^T\widetilde{I}_{11}\widetilde{\mathcal{E}} + \nonumber \\& \widetilde{\mathcal{E}}^T\widetilde{I}_{14}\widetilde{\mathcal{E}} +   \widetilde{\mathcal{E}}^T\widetilde{I}_{14}^T\widetilde{\mathcal{E}} + \left(\tau_2 - \tau_1\right)\widetilde{\mathcal{E}}^T\widetilde{J}_{11}\widetilde{\mathcal{E}} \\
\widetilde{G}_{12}=&\widetilde{\mathcal{E}}^TP\widetilde{\mathcal{F}}_1 + \widetilde{\mathcal{F}}_0^T\widetilde{\Xi}\widetilde{\mathcal{F}}_1 + \tau_1\widetilde{\mathcal{E}}^T\widetilde{H}_{12}\widetilde{\mathcal{E}} -  \widetilde{\mathcal{E}}^T\widetilde{H}_{14}\widetilde{\mathcal{E}} + \nonumber \\& \widetilde{\mathcal{E}}^T\widetilde{H}_{24}^T\widetilde{\mathcal{E}} + \tau_2\widetilde{\mathcal{E}}^T\widetilde{I}_{12}\widetilde{\mathcal{E}} + \widetilde{\mathcal{E}}^T\widetilde{I}_{24}^T\widetilde{\mathcal{E}} + \nonumber \\& \left(\tau_2-\tau_1\right)\widetilde{\mathcal{E}}^T\widetilde{J}_{12}\widetilde{\mathcal{E}} + \widetilde{\mathcal{E}}^T\widetilde{J}_{14}\widetilde{\mathcal{E}} \\
\widetilde{G}_{13}=&\widetilde{\mathcal{E}}^T\widetilde{P}\widetilde{\mathcal{F}}_2 + \widetilde{\mathcal{F}}_0^T\widetilde{\Xi}\widetilde{\mathcal{F}}_2 +  \tau_1\widetilde{\mathcal{E}}^T\widetilde{H}_{13}\widetilde{\mathcal{E}} +  \widetilde{\mathcal{E}}^T\widetilde{H}_{34}^T\widetilde{\mathcal{E}} + \nonumber \\& \widetilde{\mathcal{E}}^T\widetilde{I}_{34}^T\widetilde{\mathcal{E}} + \tau_2\widetilde{\mathcal{E}}^T\widetilde{I}_{13}\widetilde{\mathcal{E}} - \widetilde{\mathcal{E}}^T\widetilde{I}_{14}\widetilde{\mathcal{E}} + \nonumber \\& \left(\tau_2-\tau_1\right)\widetilde{\mathcal{E}}^T\widetilde{J}_{13}\widetilde{\mathcal{E}} - \widetilde{\mathcal{E}}^T\widetilde{J}_{14}\widetilde{\mathcal{E}} \\
\widetilde{G}_{22}=&-\widetilde{\mathcal{E}}^T\widetilde{Q}_{1}\widetilde{\mathcal{E}} + \widetilde{\mathcal{F}}_1^T\widetilde{\Xi}\widetilde{\mathcal{F}}_1 +  \tau_1\widetilde{\mathcal{E}}^T\widetilde{H}_{22}\widetilde{\mathcal{E}} -  \widetilde{\mathcal{E}}^T\widetilde{H}_{24}\widetilde{\mathcal{E}} - \nonumber \\&  \widetilde{\mathcal{E}}^T\widetilde{H}_{24}^T\widetilde{\mathcal{E}} + \tau_2\widetilde{\mathcal{E}}^T\widetilde{I}_{22}\widetilde{\mathcal{E}} +  \left(\tau_2 - \tau_1\right)\widetilde{\mathcal{E}}^T\widetilde{J}_{22}\widetilde{\mathcal{E}} \nonumber \\& + \widetilde{\mathcal{E}}^T\widetilde{J}_{24}\widetilde{\mathcal{E}} + \widetilde{\mathcal{E}}^T\widetilde{J}_{24}^T\widetilde{\mathcal{E}} \\ 
\widetilde{G}_{23} =& \widetilde{\mathcal{F}}_1^T\widetilde{\Xi}\widetilde{\mathcal{F}}_2 + \tau_1\widetilde{\mathcal{E}}^T\widetilde{H}_{23}\widetilde{\mathcal{E}} -  \widetilde{\mathcal{E}}^T\widetilde{H}_{34}^T\widetilde{\mathcal{E}} + \tau_2\widetilde{\mathcal{E}}^T\widetilde{I}_{23}\widetilde{\mathcal{E}} - \nonumber \\& \widetilde{\mathcal{E}}^T\widetilde{I}_{24}\widetilde{\mathcal{E}} + \left(\tau_2-\tau_1\right)\widetilde{\mathcal{E}}^T\widetilde{J}_{23}\widetilde{\mathcal{E}} - \nonumber \\&  \widetilde{\mathcal{E}}^T\widetilde{J}_{24}\widetilde{\mathcal{E}} + \widetilde{\mathcal{E}}^T\widetilde{J}_{34}^T\widetilde{\mathcal{E}} \\
\widetilde{G}_{33}=&-\widetilde{\mathcal{E}}^T\widetilde{Q}_{2}\widetilde{\mathcal{E}} + \widetilde{\mathcal{F}}_2^T\widetilde{\Xi}\widetilde{\mathcal{F}}_2 + \tau_1\widetilde{\mathcal{E}}^T\widetilde{H}_{33}\widetilde{\mathcal{E}} -  \widetilde{\mathcal{E}}^T\widetilde{I}_{34}\widetilde{\mathcal{E}} \nonumber \\&  -\widetilde{\mathcal{E}}^T\widetilde{I}_{34}^T\widetilde{\mathcal{E}} + \tau_2\widetilde{\mathcal{E}}^T\widetilde{I}_{33}\widetilde{\mathcal{E}} + \left(\tau_2 - \tau_1\right)\widetilde{\mathcal{E}}^T\widetilde{J}_{33}\widetilde{\mathcal{E}} \nonumber \\&  -\widetilde{\mathcal{E}}^T\widetilde{J}_{34}\widetilde{\mathcal{E}} -  \widetilde{\mathcal{E}}^T\widetilde{J}_{34}^T\widetilde{\mathcal{E}} \\
\widetilde{\Xi} = &\tau_1\widetilde{Z}_1 + \tau_2\widetilde{Z}_2 + \left(\tau_2-\tau_1\right)\widetilde{Z}_3 
\end{align}
\end{theorem}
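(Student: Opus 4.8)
The plan is to reduce the consensus claim for the linear element \eqref{eqn1b} to global asymptotic stability of a $(2n-2)$-dimensional delay system, and then to certify that stability with a delay-dependent Lyapunov--Krasovskii functional whose derivative along trajectories is precisely the quadratic form controlled by \eqref{eqn_l6}--\eqref{eqn_l9}. First I would set up the error dynamics: left-multiplying the closed loop \eqref{eqn_l1} (with $\mathcal{A}_i$ as in \eqref{eqn_l2}--\eqref{eqn_l5}) by $\mathcal{E}$ and writing $\mathcal{F}_i=\mathcal{E}\mathcal{A}_i$ gives $\dot{\boldsymbol{\Psi}}=\mathcal{F}_0\boldsymbol{X}+\mathcal{F}_1\boldsymbol{X}(t-\tau_1)+\mathcal{F}_2\boldsymbol{X}(t-\tau_2)$. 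Let $U=[U_1\ U_2]$ be as in \cref{lem2}, so $U_2$ spans $\ker\mathcal{E}$; since $\mathcal{E}=I_{2n}-\Phi_{01}$ with $\Phi_{01}$ idempotent, $\ker\mathcal{E}=\operatorname{range}\Phi_{01}$. The block form $U^T\mathcal{F}_iU=\operatorname{diag}(\widetilde{\mathcal{F}}_i,0_{2\times2})$ from \cref{lem2} forces $\mathcal{F}_iU_2=0$, hence $\mathcal{F}_i\Phi_{01}=0$; since $\boldsymbol{X}=\boldsymbol{\Psi}+\Phi_{01}\boldsymbol{X}$, the error therefore obeys the autonomous equation $\dot{\boldsymbol{\Psi}}(t)=\mathcal{F}_0\boldsymbol{\Psi}(t)+\mathcal{F}_1\boldsymbol{\Psi}(t-\tau_1)+\mathcal{F}_2\boldsymbol{\Psi}(t-\tau_2)$, and by \cref{lem3} it suffices to show $\boldsymbol{\Psi}\to0$. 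Because $\operatorname{range}\mathcal{F}_i=\operatorname{range}(\mathcal{E}\mathcal{A}_i)\subseteq\operatorname{range}\mathcal{E}=\operatorname{range}U_1$ and $\boldsymbol{\Psi}(0)=\mathcal{E}\boldsymbol{X}(0)\in\operatorname{range}\mathcal{E}$, the error stays in $\operatorname{range}U_1$, so $\zeta:=U_1^T\boldsymbol{\Psi}$ satisfies $\boldsymbol{\Psi}=U_1\zeta$ and, again by \cref{lem2}, $\dot{\zeta}(t)=\widetilde{\mathcal{F}}_0\zeta(t)+\widetilde{\mathcal{F}}_1\zeta(t-\tau_1)+\widetilde{\mathcal{F}}_2\zeta(t-\tau_2)$ with $\boldsymbol{\Psi}\to0\iff\zeta\to0$ and $\widetilde{\mathcal{E}}=U_1^T\mathcal{E}U_1=I_{2n-2}$ (kept symbolic to match \cref{lem2}).

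Next I would run a standard free--weighting delay-dependent Lyapunov--Krasovskii argument on this reduced system, using $0\le\tau_1\le\tau_2$ to split $[t-\tau_2,t]$ at $t-\tau_1$. Take $V=\zeta^T(t)\widetilde{P}\zeta(t)+\sum_{k=1}^{2}\int_{t-\tau_k}^{t}\zeta^T(s)\widetilde{Q}_k\zeta(s)\,ds+\int_{-\tau_1}^{0}\!\int_{t+\theta}^{t}\dot{\zeta}^T(s)\widetilde{Z}_1\dot{\zeta}(s)\,ds\,d\theta+\int_{-\tau_2}^{0}\!\int_{t+\theta}^{t}\dot{\zeta}^T(s)\widetilde{Z}_2\dot{\zeta}(s)\,ds\,d\theta+\int_{-\tau_2}^{-\tau_1}\!\int_{t+\theta}^{t}\dot{\zeta}^T(s)\widetilde{Z}_3\dot{\zeta}(s)\,ds\,d\theta$, which is positive definite since $\widetilde{P},\widetilde{Q}_k,\widetilde{Z}_i\succ 0$. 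Its derivative along trajectories is $2\zeta^T(t)\widetilde{P}\dot{\zeta}(t)$ plus $\zeta^T(t)(\widetilde{Q}_1+\widetilde{Q}_2)\zeta(t)-\zeta^T(t-\tau_1)\widetilde{Q}_1\zeta(t-\tau_1)-\zeta^T(t-\tau_2)\widetilde{Q}_2\zeta(t-\tau_2)$ plus $\dot{\zeta}^T(t)\widetilde{\Xi}\dot{\zeta}(t)$, with $\widetilde{\Xi}=\tau_1\widetilde{Z}_1+\tau_2\widetilde{Z}_2+(\tau_2-\tau_1)\widetilde{Z}_3$, minus the three integrals $\int_{t-\tau_1}^{t}\dot{\zeta}^T(s)\widetilde{Z}_1\dot{\zeta}(s)\,ds$, $\int_{t-\tau_2}^{t}\dot{\zeta}^T(s)\widetilde{Z}_2\dot{\zeta}(s)\,ds$ and $\int_{t-\tau_2}^{t-\tau_1}\dot{\zeta}^T(s)\widetilde{Z}_3\dot{\zeta}(s)\,ds$.

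I would then bound each of those three integrals from above by a free--weighting completion. Writing $\xi=[\zeta^T(t)\ \zeta^T(t-\tau_1)\ \zeta^T(t-\tau_2)]^T$, positivity of the $4\times4$ block matrix \eqref{eqn_l7} gives $0\le\int_{t-\tau_1}^{t}[\xi^T\ \dot{\zeta}^T(s)]\,\widetilde{H}\,[\xi^T\ \dot{\zeta}^T(s)]^T\,ds$, and the Newton--Leibniz identity $\int_{t-\tau_1}^{t}\dot{\zeta}(s)\,ds=\zeta(t)-\zeta(t-\tau_1)$ turns this into an upper bound for $-\int_{t-\tau_1}^{t}\dot{\zeta}^T(s)\widetilde{Z}_1\dot{\zeta}(s)\,ds$ that is quadratic in $\xi$; \eqref{eqn_l8} does the same over $[t-\tau_2,t]$ with the identity $\zeta(t)-\zeta(t-\tau_2)$, and \eqref{eqn_l9} over $[t-\tau_2,t-\tau_1]$ with $\zeta(t-\tau_1)-\zeta(t-\tau_2)$. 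Substituting $\dot{\zeta}=\widetilde{\mathcal{F}}_0\zeta(t)+\widetilde{\mathcal{F}}_1\zeta(t-\tau_1)+\widetilde{\mathcal{F}}_2\zeta(t-\tau_2)$ wherever $\dot{\zeta}$ still occurs and collecting terms gives $\dot{V}\le\xi^T\widetilde{G}\xi$ with $\widetilde{G}=[\widetilde{G}_{ij}]$ exactly as in the statement --- this is the origin of the $\widetilde{\mathcal{F}}_i^T\widetilde{\Xi}\widetilde{\mathcal{F}}_j$ entries (from $\dot{\zeta}^T\widetilde{\Xi}\dot{\zeta}$), the $\widetilde{\mathcal{E}}^T\widetilde{P}\widetilde{\mathcal{F}}_i$ entries (from $2\zeta^T\widetilde{P}\dot{\zeta}$), the $\widetilde{Q}_k$ terms, and the $\tau_1,\tau_2,\tau_2-\tau_1$-weighted $\widetilde{H},\widetilde{I},\widetilde{J}$ cross-terms. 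By \eqref{eqn_l6}, $\widetilde{G}\prec0$, so $\dot{V}\le-\lambda_{\min}(-\widetilde{G})\,\|\zeta(t)\|^2$; the Lyapunov--Krasovskii theorem then yields $\zeta(t)\to0$, hence $\boldsymbol{\Psi}=U_1\zeta\to0$, and by \cref{lem3} $x_i\to x_j$ and $v_i\to v_j$, i.e.\ the control law solves the consensus problem --- globally, because the reduced dynamics are linear.

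The degenerate cases $\tau_1=\tau_2>0$ and $\tau_1=\tau_2=0$ are handled with $\widetilde{\mathcal{F}}_1+\widetilde{\mathcal{F}}_2$ (resp.\ $\widetilde{\mathcal{F}}_0+\widetilde{\mathcal{F}}_1+\widetilde{\mathcal{F}}_2$) replacing the collapsed pieces, cf.\ \cref{lem2}, and the same estimate applies. I expect the main obstacle to be the last assembly step: arranging the three Newton--Leibniz completions and the weighted $\widetilde{H},\widetilde{I},\widetilde{J}$ terms so that the resulting quadratic form reproduces every block $\widetilde{G}_{ij}$ exactly --- this is routine but long and sign-sensitive bookkeeping. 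The reduction in the first paragraph, once one notices that \cref{lem2} already delivers $\mathcal{F}_i\Phi_{01}=0$ and the invariance of $\operatorname{range}\mathcal{E}$, is comparatively short.
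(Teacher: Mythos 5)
Your proposal is correct and follows essentially the same route as the paper: the same Lyapunov--Krasovskii functional with $\widetilde{P},\widetilde{Q}_1,\widetilde{Q}_2,\widetilde{Z}_1,\widetilde{Z}_2,\widetilde{Z}_3$, the same Newton--Leibniz free-weighting completion via the three positive semidefinite block matrices \cref{eqn_l7,eqn_l8,eqn_l9}, and the same use of \cref{lem2,lem3} to pass from the error $\boldsymbol{\Psi}=\mathcal{E}\boldsymbol{X}$ to consensus. The only difference is ordering --- you project onto the $(2n-2)$-dimensional subspace via $U$ \emph{before} running the Lyapunov computation, whereas the paper carries out the computation in the full $2n$-dimensional space with balanced rank-$(2n-2)$ matrices and applies $U^T(\cdot)U$ at the end to obtain \cref{eqn_l6}; this is a presentational reordering rather than a different method (though your ordering does make the positive definiteness of $V$ on the relevant subspace more transparent).
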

\begin{proof}
\normalfont
Let $P$, $Q_1$, $Q_2$, $Z_i$ $i \in [1,3]$ be balanced positive semi-definite matrices of rank $2n-2$ and $\boldsymbol{\Psi}=\mathcal{E}\boldsymbol{X}$ using $\mathcal{E}$ from \cref{lem1}. \\
The Lyapunov-Krasovskii functional is assumed as,
\begin{align}\label{eqn_l10}
V\left(\boldsymbol{\Psi}\left(t\right)\right) &= \boldsymbol{\Psi}^T\left(t\right)P\boldsymbol{\Psi}\left(t\right) + \int_{t-\tau_{1}}^{t} \boldsymbol{\Psi}^T\left(s\right)Q_{1}\boldsymbol{\Psi}\left(s\right)ds \nonumber \\& + \int_{t-\tau_{2}}^{t} \boldsymbol{\Psi}^T\left(s\right)Q_{2}\boldsymbol{\Psi}\left(s\right)ds \nonumber \\& + 
\int_{-\tau_1}^0\int_{t+\theta}^{t} \boldsymbol{\dot{\Psi}}^T\left(s\right)Z_{1}\boldsymbol{\dot{\Psi}}\left(s\right)dsd\theta \nonumber \\& + \int_{-\tau_2}^0\int_{t+\theta}^{t} \boldsymbol{\dot{\Psi}}^T\left(s\right)Z_{2}\boldsymbol{\dot{\Psi}}\left(s\right)dsd\theta \nonumber \\& + \int_{-\tau_2}^{-\tau_1}\int_{t+\theta}^{t} \boldsymbol{\dot{\Psi}}^T\left(s\right)Z_{3}\boldsymbol{\dot{\Psi}}\left(s\right)dsd\theta
\end{align}
\begin{align}\label{eqn_l11}
\dot{V}\left(\boldsymbol{\Psi}\left(t\right)\right) &= \boldsymbol{\dot{\Psi}}^T\left(t\right)P\boldsymbol{\Psi}\left(t\right) + \boldsymbol{\Psi}^T\left(t\right)P\boldsymbol{\dot{\Psi}}\left(t\right) + \nonumber \\& \boldsymbol{\Psi}^T\left(t\right)Q_1\boldsymbol{\Psi}\left(t\right) + \boldsymbol{\Psi}^T\left(t\right)Q_2\boldsymbol{\Psi}\left(t\right) - \nonumber \\& \boldsymbol{\Psi}^T\left(t-\tau_{1}\right)Q_1\boldsymbol{\Psi}\left(t-\tau_{1}\right) - \nonumber \\& \boldsymbol{\Psi}^T\left(t-\tau_{2}\right)Q_2\boldsymbol{\Psi}\left(t-\tau_{2}\right) +  \tau_1\boldsymbol{\dot{\Psi}}^T\left(t\right)Z_1\boldsymbol{\dot{\Psi}}\left(t\right) \nonumber \\& +\tau_2\boldsymbol{\dot{\Psi}}^T\left(t\right)Z_2\boldsymbol{\dot{\Psi}}\left(t\right) + \left(\tau_2-\tau_1\right)\boldsymbol{\dot{\Psi}}^T\left(t\right)Z_3\boldsymbol{\dot{\Psi}}\left(t\right) \nonumber \\& - \int_{-\tau_1}^0 \boldsymbol{\dot{\Psi}}^T\left(t+\theta\right)Z_1\boldsymbol{\dot{\Psi}}\left(t+\theta\right)d\theta \nonumber \\& - \int_{-\tau_2}^0 \boldsymbol{\dot{\Psi}}^T\left(t+\theta\right)Z_2\boldsymbol{\dot{\Psi}}\left(t+\theta\right)d\theta \nonumber \\ & - \int_{-\tau_2}^{-\tau_1} \boldsymbol{\dot{\Psi}}^T\left(t+\theta\right)Z_3\boldsymbol{\dot{\Psi}}\left(t+\theta\right)d\theta
\end{align}
Let,
\begin{equation}\label{eqn_l11b}
\boldsymbol{\widehat{X}} = \left[\begin{matrix}\boldsymbol{X}\left(t\right)^T & \boldsymbol{X}\left(t-\tau_{1}\right)^T & \boldsymbol{X}\left(t-\tau_{2}\right)^T\end{matrix} \right]
\end{equation}
then,
\begin{equation}\label{eqn_l12}
\begin{split}
\boldsymbol{\dot{\Psi}} &= \mathcal{E}\boldsymbol{\dot{X}} \\
&=\mathcal{E}\left[\begin{matrix}\mathcal{A}_0 & \mathcal{A}_1 & \mathcal{A}_2\end{matrix} \right]\boldsymbol{\widehat{X}}^T
\end{split}
\end{equation}
\begin{equation}\label{eqn_l13}
\boldsymbol{\dot{\Psi}}^T\left(t\right)P\boldsymbol{\Psi}\left(t\right) = \boldsymbol{\widehat{X}} \left[\begin{matrix}\mathcal{A}_0 & \mathcal{A}_1 & \mathcal{A}_2\end{matrix} \right]^T \mathcal{E}^T P \mathcal{E}\boldsymbol{X}\left(t\right)
\end{equation}
\begin{equation}\label{eqn_l14}
\boldsymbol{\Psi}^T\left(t\right)P\boldsymbol{\dot{\Psi}}\left(t\right) = \boldsymbol{X}\left(t\right)^T\mathcal{E}^T P \mathcal{E}\left[\begin{matrix}\mathcal{A}_0 & \mathcal{A}_1 & \mathcal{A}_2\end{matrix} \right]\boldsymbol{\widehat{X}}^T
\end{equation}
\begin{equation}\label{eqn_l15}
\boldsymbol{\Psi}^T\left(t\right)Q_1\boldsymbol{\Psi}\left(t\right) = \boldsymbol{X}\left(t\right)^T\mathcal{E}^T Q_1 \mathcal{E}\boldsymbol{X}\left(t\right)
\end{equation}
\begin{equation}\label{eqn_l16}
\boldsymbol{\Psi}^T\left(t\right)Q_2\boldsymbol{\Psi}\left(t\right) = \boldsymbol{X}\left(t\right)^T\mathcal{E}^T Q_2 \mathcal{E}\boldsymbol{X}\left(t\right)
\end{equation}
\begin{equation}\label{eqn_l17}
\begin{split}
\boldsymbol{\Psi}^T\left(t-\tau_{1}\right)Q_1\boldsymbol{\Psi} & \left(t-\tau_{1}\right) = \\& \boldsymbol{X}\left(t-\tau_1\right)^T\mathcal{E}^T Q_1 \mathcal{E}\boldsymbol{X}\left(t-\tau_1\right)
\end{split}
\end{equation}
\begin{equation}\label{eqn_l18}
\begin{split}
\boldsymbol{\Psi}^T\left(t-\tau_{2}\right)Q_2\boldsymbol{\Psi} & \left(t-\tau_{2}\right) = \\& \boldsymbol{X}\left(t-\tau_2\right)^T\mathcal{E}^T Q_2 \mathcal{E}\boldsymbol{X}\left(t-\tau_2\right)
\end{split}
\end{equation}
For $i=\{1,2,3\}$,
\begin{equation}\label{eqn_l19}
\begin{split}
\boldsymbol{\dot{\Psi}}^T & \left(t\right)Z_i\boldsymbol{\dot{\Psi}}\left(t\right) = \\& \boldsymbol{\widehat{X}} \left[\begin{matrix}\mathcal{A}_0 & \mathcal{A}_1 & \mathcal{A}_2\end{matrix} \right]^T \mathcal{E}^T Z_i\mathcal{E} \left[\begin{matrix}\mathcal{A}_0 & \mathcal{A}_1 & \mathcal{A}_2\end{matrix} \right]\boldsymbol{\widehat{X}}^T 
\end{split}
\end{equation}
Consider a set of matrices,
\begin{equation}\label[ineq]{eqn_l20}
\left[\begin{matrix} H_{11} & H_{12} & H_{13} & H_{14}\\ H_{12}^T & H_{22} & H_{23} & H_{24}\\ H_{13}^T & H_{23}^T & H_{33}& H_{34} \\ H_{14}^T & H_{24}^T & H_{34}^T & Z_{1} \end{matrix}\right] \succcurlyeq 0
\end{equation}
\begin{equation}\label[ineq]{eqn_l21}
\left[\begin{matrix} I_{11} & I_{12} & I_{13} & I_{14}\\ I_{12}^T & I_{22} & I_{23} & I_{24}\\ I_{13}^T & I_{23}^T & I_{33}& I_{34} \\ I_{14}^T & I_{24}^T & I_{34}^T & Z_{2} \end{matrix}\right] \succcurlyeq 0
\end{equation}
\begin{equation}\label[ineq]{eqn_l22}
\left[\begin{matrix} J_{11} & J_{12} & J_{13} & J_{14}\\ J_{12}^T & J_{22} & J_{23} & J_{24}\\ J_{13}^T & J_{23}^T & J_{33}& J_{34} \\ J_{14}^T & J_{24}^T & J_{34}^T & Z_{3} \end{matrix}\right] \succcurlyeq 0
\end{equation}
Where $H_{ij}, I_{ij}, J_{ij}\; \forall i \in [1,3]$ $\forall j \in [1,4]$ are some arbitrary matrices to be found by an LMI solver with size $2n \times 2n$. 
\\
Let,
\begin{equation}\label{eqn_l22b}
\boldsymbol{\widehat{\Psi}}_{\dot{\theta}}=\left[\begin{matrix} \boldsymbol{\Psi}\left(t\right)^T & \boldsymbol{\Psi}\left(t-\tau_1\right)^T & \boldsymbol{\Psi}\left(t-\tau_2\right)^T& \boldsymbol{\dot{\Psi}}\left(t+\theta\right)^T\end{matrix}\right]
\end{equation}
then,
\begin{equation}\label[ineq]{eqn_l23}
\int_{-\tau_1}^0 \boldsymbol{\widehat{\Psi}}_{\dot{\theta}} \left[\begin{matrix} H_{11} & H_{12} & H_{13} & H_{14}\\ H_{12}^T & H_{22} & H_{23} & H_{24}\\ H_{13}^T & H_{23}^T & H_{33}& H_{34} \\ H_{14}^T & H_{24}^T & H_{34}^T & Z_{1} \end{matrix}\right] \boldsymbol{\widehat{\Psi}}_{\dot{\theta}}^T d\theta \geq 0
\end{equation}
\begin{equation}\label[ineq]{eqn_l24}
\int_{-\tau_2}^0 \boldsymbol{\widehat{\Psi}}_{\dot{\theta}} \left[\begin{matrix} I_{11} & I_{12} & I_{13} & I_{14}\\ I_{12}^T & I_{22} & I_{23} & I_{24}\\ I_{13}^T & I_{23}^T & I_{33}& I_{34} \\ I_{14}^T & I_{24}^T & I_{34}^T & Z_{2} \end{matrix}\right] \boldsymbol{\widehat{\Psi}}_{\dot{\theta}}^T d\theta \geq 0
\end{equation}
\begin{equation}\label[ineq]{eqn_l25}
\int_{-\tau_2}^{-\tau_1} \boldsymbol{\widehat{\Psi}}_{\dot{\theta}} \left[\begin{matrix} J_{11} & J_{12} & J_{13} & J_{14}\\ J_{12}^T & J_{22} & J_{23} & J_{24}\\ J_{13}^T & J_{23}^T & J_{33}& J_{34} \\ J_{14}^T & J_{24}^T & J_{34}^T & Z_{3} \end{matrix}\right] \boldsymbol{\widehat{\Psi}}_{\dot{\theta}}^T d\theta \geq 0
\end{equation}
The matrices in \cref{eqn_l20,eqn_l21,eqn_l22} are chosen to satisfy expression in \cref{eqn_l23,eqn_l24,eqn_l25}, which further simplify $\dot{V}$ in \cref{eqn_l11}. Parts of \cref{eqn_l11} consisting integrals with multiplication two variable in terms of $\theta$ are eliminated when added with \cref{eqn_l23,eqn_l24,eqn_l25}, since $\dot{V} + \{positive\,semidefinite\} < 0$ implies $\dot{V}<0$. Substituting \cref{eqn_l11b,eqn_l12,eqn_l13,eqn_l14,eqn_l15,eqn_l16,eqn_l17,eqn_l18,eqn_l19} in \cref{eqn_l11}, adding \cref{eqn_l23,eqn_l24,eqn_l25} and further solving leftover integrals, \cref{eqn_l26} is obtained. 
\begin{equation}\label[ineq]{eqn_l26}
\dot{V} \leq  \boldsymbol{\widehat{X}}\left[\begin{matrix} G_{11} & G_{12} & G_{13} \\ G_{12}^T & G_{22} & G_{23} \\ G_{13}^T & G_{23}^T & G_{33} \end{matrix}\right]\boldsymbol{\widehat{X}}^T
\end{equation}
where,
\begin{align}G_{11}=&\mathcal{A}_0^T\mathcal{E}^TP\mathcal{E} + \mathcal{E}^TP\mathcal{E}\mathcal{A}_0 + \mathcal{E}^TQ_{1}\mathcal{E} + \mathcal{E}^TQ_{2}\mathcal{E} + \\&  \mathcal{A}_0^T\mathcal{E}^T\Xi\mathcal{E}\mathcal{A}_0  + \tau_1\mathcal{E}^TH_{11}\mathcal{E} +  \mathcal{E}^TH_{14}\mathcal{E} + \mathcal{E}^TH_{14}^T\mathcal{E} +  \nonumber \\& \tau_2\mathcal{E}^TI_{11}\mathcal{E} + \mathcal{E}^TI_{14}\mathcal{E} + \mathcal{E}^TI_{14}^T\mathcal{E} + \left(\tau_2 - \tau_1\right)\mathcal{E}^TJ_{11}\mathcal{E} \nonumber \\
G_{12}=&\mathcal{E}^TP\mathcal{E}\mathcal{A}_1 + \mathcal{A}_0^T\mathcal{E}^T\Xi\mathcal{E}\mathcal{A}_1 + \tau_1\mathcal{E}^TH_{12}\mathcal{E} - \mathcal{E}^TH_{14}\mathcal{E} \nonumber \\& + \mathcal{E}^TH_{24}^T\mathcal{E}  + \tau_2\mathcal{E}^TI_{12}\mathcal{E} + \mathcal{E}^TI_{24}^T\mathcal{E} + \nonumber \\& \left(\tau_2-\tau_1\right)\mathcal{E}^TJ_{12}\mathcal{E} + \mathcal{E}^TJ_{14}\mathcal{E} \\
G_{13}=&\mathcal{E}^TP\mathcal{E}\mathcal{A}_2 + \mathcal{A}_0^T\mathcal{E}^T\Xi\mathcal{E}\mathcal{A}_2 + \tau_1\mathcal{E}^TH_{13}\mathcal{E} + \nonumber \\& \mathcal{E}^TH_{34}^T\mathcal{E} + \tau_2\mathcal{E}^TI_{13}\mathcal{E}  + \mathcal{E}^TI_{34}^T\mathcal{E} - \mathcal{E}^TI_{14}\mathcal{E} + \nonumber \\& \left(\tau_2-\tau_1\right)\mathcal{E}^TJ_{13}\mathcal{E} - \mathcal{E}^TJ_{14}\mathcal{E} \\
G_{22}=&-\mathcal{E}^TQ_{1}\mathcal{E} + \mathcal{A}_1^T\mathcal{E}^T\Xi\mathcal{E}\mathcal{A}_1 +  \tau_1\mathcal{E}^TH_{22}\mathcal{E} - \mathcal{E}^TH_{24}\mathcal{E} \nonumber \\& -\mathcal{E}^TH_{24}^T\mathcal{E}  + \tau_2\mathcal{E}^TI_{22}\mathcal{E} + \left(\tau_2 - \tau_1\right)\mathcal{E}^TJ_{22}\mathcal{E} \nonumber \\& + \mathcal{E}^TJ_{24}\mathcal{E} + \mathcal{E}^TJ_{24}^T\mathcal{E} \\ 
G_{23}=&\mathcal{A}_1^T\mathcal{E}^T\Xi\mathcal{E}\mathcal{A}_2 + \tau_1\mathcal{E}^TH_{23}\mathcal{E} - \mathcal{E}^TH_{34}^T\mathcal{E}  \nonumber \\& + \tau_2\mathcal{E}^TI_{23}\mathcal{E} - \mathcal{E}^TI_{24}\mathcal{E} +\left(\tau_2-\tau_1\right)\mathcal{E}^TJ_{23}\mathcal{E} \nonumber \\& - \mathcal{E}^TJ_{24}\mathcal{E} + \mathcal{E}^TJ_{34}^T\mathcal{E} \\
G_{33}=&-\mathcal{E}^TQ_{2}\mathcal{E} + \mathcal{A}_2^T\mathcal{E}^T\Xi\mathcal{E}\mathcal{A}_2 + \tau_1\mathcal{E}^TH_{33}\mathcal{E} - \mathcal{E}^TI_{34}\mathcal{E} \nonumber \\&  -\mathcal{E}^TI_{34}^T\mathcal{E} + \tau_2\mathcal{E}^TI_{33}\mathcal{E} + \left(\tau_2 - \tau_1\right)\mathcal{E}^TJ_{33}\mathcal{E} \nonumber \\& - \mathcal{E}^TJ_{34}\mathcal{E} - \mathcal{E}^TJ_{34}^T\mathcal{E} \\
\Xi = &\tau_1Z_1 + \tau_2Z_2 + \left(\tau_2-\tau_1\right)Z_3
\end{align}

The matrices $P$, $Q_1$, $Q_2$, $Z_i$, $H_{ij}$, $H_{ij}^T$, $I_{ij}$, $I_{ij}^T$, $J_{ij}$, $J_{ij}^T$ and $\mathcal{F}_i$ from \cref{lem2} $\forall i \in [1,3]$, $\forall j \in [1,4]$ will generate corresponding $\widetilde{P}$, $\widetilde{Q}_1$, $\widetilde{Q}_2$, $\widetilde{Z}_i$, $\widetilde{H}_{ij}$, $\widetilde{H}_{ij}^T$, $\widetilde{I}_{ij}$, $\widetilde{I}_{ij}^T$, $\widetilde{J}_{ij}$, $\widetilde{J}_{ij}^T$ and $\widetilde{\mathcal{F}}_i$ $\forall i \in [1,3]$, $\forall j \in [1,4]$ of size $(2n-2) \times (2n-2)$ when multiplied with eigenvector matrices $U^T$, $U$ at appropriate positions. The corresponding $G_{ij}$ is as given in \cref{eqn_l27}. \\
\begin{equation}\label{eqn_l27}
\begin{split}
G_{ij}=U\left[\begin{matrix} \widetilde{G}_{ij_{(2n-2) \times (2n-2)}} & 0_{_{(2n-2) \times 2}} \\
0_{_{2 \times (2n-2)}} & 0_{_{2 \times 2}}\end{matrix}\right]U^T &\\
\{i,j\} \in [1,3];\; \widetilde{G}_{ji} = \widetilde{G}_{ji}^T & \; \forall i \neq j
\end{split}
\end{equation}
With above set of reduced order matrices, the LMIs given in \cref{eqn_l6,eqn_l7,eqn_l8,eqn_l9} can be obtained.
\QEDB
\end{proof}
Feasibility of LMIS in \cref{eqn_l6,eqn_l7,eqn_l8,eqn_l9} determine the consensus reachability of the multi-agent system affected by time-delays. They can be solved by using solvers like SeDuMi \cite{sedumi}, Matlab LMI Lab solver etc..
\subsection{Nyquist stability approach}\label{freq}
Stability analysis of the linear element for different control inputs is performed using frequency domain analysis and Nyquist stability criterion which is discussed in \cref{case-1,case-2}.
\subsubsection{First control law}\label{case-1}
Consider a control input $u_{i1}$ with input delay $\tau_1$ and communication delay $\tau_2$ as given in \cref{eqn1c} assuming $\tau_1 \leq \tau_2$.
\begin{theorem}\label{th_freq1}
The system represented by \cref{eqn1b,eqn1c} is stable if and only if,
\begin{equation}\label{eqn_f1}
\dfrac{|\tilde{\lambda}_k|}{\sqrt{\overline{\omega}^4-2\overline{\omega}^3\sin\left(\overline{\omega} \tau_1\right)+\overline{\omega}^2\left(1-2\cos\left(\overline{\omega} \tau_1\right)\right)+1}} < 1
\end{equation}
where $\overline{\omega}$ satisfies 
\begin{equation}\label{eqn_f2}
\begin{split}
-\pi= &-\overline{\omega} \tau_2 + \arg\left(-\tilde{\lambda}_k\right) \\& - \arctan\left(\dfrac{\overline{\omega}\cos\left(\overline{\omega} \tau_1\right)-\sin\left(\overline{\omega} \tau_1\right)}{-\overline{\omega}^2+\cos\left(\overline{\omega}\tau_1\right)+ \overline{\omega} \sin\left(\overline{\omega} \tau_1\right)}\right)
\end{split}
\end{equation}
\end{theorem}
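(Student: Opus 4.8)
The plan is to collapse the matrix delay system \cref{eqn_l1} with $\mathcal{A}_0,\mathcal{A}_1,\mathcal{A}_2$ from \cref{eqn_l2} into a finite family of scalar retarded quasi-polynomials indexed by the eigenvalues $\widetilde\lambda_k$ of $\widetilde{\mathcal{A}}$, and then apply the Nyquist criterion to each of them. First I would write the characteristic function $\det\!\big(sI_{2n}-\mathcal{A}_0-\mathcal{A}_1e^{-s\tau_1}-\mathcal{A}_2e^{-s\tau_2}\big)$ and exploit the $2\times 2$ block structure of these matrices: the $(1,1)$ block is $sI_n$, invertible off the origin, so a Schur complement reduces the determinant to $\det\!\big(D(s)I_n-\widetilde{\mathcal{A}}\,e^{-s\tau_2}\big)$ with $D(s):=s^{2}+(s+1)e^{-s\tau_1}$. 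Passing $\widetilde{\mathcal{A}}$ to Schur triangular form then factors this as $\prod_{k=1}^{n}\Delta_k(s)$, $\Delta_k(s)=D(s)-\widetilde\lambda_k e^{-s\tau_2}$, so every characteristic mode is governed by a single quasi-polynomial whose principal term is $s^{2}$.

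Next I would isolate the consensus mode. Since $\widetilde{\mathcal{A}}$ is row-stochastic we have $\widetilde\lambda_1=1$, $\Delta_1(0)=0$, and $\Delta_1'(0)=1+\tau_2-\tau_1>0$ under the hypothesis $\tau_1\le\tau_2$, so $s=0$ is a simple root; by \cref{lem1,lem2} this root (together with the contractive common-velocity mode) is precisely what the projection $\mathcal{E}$ strips off, and by \cref{lem3} stability of the linear element---equivalently $\boldsymbol\Psi\to 0$ and hence consensus---holds iff every root of $\Delta_k(s)=0$ other than this trivial one at the origin lies in the open left half-plane. Rewriting $\Delta_k(s)=0$ as $1+L_k(s)=0$ with $L_k(s):=-\widetilde\lambda_k e^{-s\tau_2}/D(s)$, the problem becomes whether the Nyquist locus of $L_k(j\omega)$ encircles $-1$.

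The computational core is then the crossover analysis. A direct evaluation at $s=j\overline\omega$ gives $|D(j\overline\omega)|^{2}=\overline\omega^{4}-2\overline\omega^{3}\sin(\overline\omega\tau_1)+\overline\omega^{2}\big(1-2\cos(\overline\omega\tau_1)\big)+1$ and $\arg D(j\overline\omega)=\arctan\!\big((\overline\omega\cos(\overline\omega\tau_1)-\sin(\overline\omega\tau_1))/(-\overline\omega^{2}+\cos(\overline\omega\tau_1)+\overline\omega\sin(\overline\omega\tau_1))\big)$, whence $|L_k(j\overline\omega)|=|\widetilde\lambda_k|/|D(j\overline\omega)|$ and $\arg L_k(j\overline\omega)=\arg(-\widetilde\lambda_k)-\overline\omega\tau_2-\arg D(j\overline\omega)$. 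The locus meets the negative real axis exactly at the $\overline\omega$ for which $\arg L_k(j\overline\omega)=-\pi$, which rearranges verbatim into \cref{eqn_f2}; at such a crossing the locus stays strictly to the right of $-1$, i.e. no encirclement occurs, iff $|L_k(j\overline\omega)|<1$, which is \cref{eqn_f1}. Because $L_k$ is strictly proper, with $|L_k(j\omega)|=O(\omega^{-2})$, the large-radius Nyquist arc contributes nothing, so the net encirclement count is determined purely by these axis crossings, which yields the claimed biconditional.

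The step I expect to be the real obstacle is making the Nyquist bookkeeping airtight: one must (i) confirm that $D(s)$ itself has no zeros in the closed right half-plane for the $\tau_1$ in force, so that the required net encirclement count is zero rather than some $-P\neq0$; (ii) treat the marginal starting value $L_1(0)=-1$ of the consensus branch without double-counting the origin root; and (iii) pin down which solution $\overline\omega$ of \cref{eqn_f2} is operative---either the first positive phase crossover, or, if several exist, argue that \cref{eqn_f1} must hold at each---so that ``the'' $\overline\omega$ in the statement is unambiguous. Items (i) and (iii) are where the precise hypotheses behind the stated equivalence have to be nailed down.
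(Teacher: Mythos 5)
Your proposal follows essentially the same route as the paper: the characteristic equation is factored into the quasi-polynomials $s^{2}+(s+1)e^{-s\tau_1}-\tilde{\lambda}_k e^{-s\tau_2}$, the loop transfer function $G_k(s)=-\tilde{\lambda}_k e^{-s\tau_2}/\left(s^{2}+(s+1)e^{-s\tau_1}\right)$ is formed, and the stated magnitude and phase expressions are evaluated at the negative-real-axis crossover exactly as in \cref{eqn_f6,eqn_f7,eqn_f8,eqn_f9}. The bookkeeping issues you flag at the end (open-loop right-half-plane zeros of the denominator, the marginal consensus mode at $s=0$, and multiplicity of phase crossovers) are likewise left unaddressed in the paper's own proof, so your treatment is if anything slightly more careful.
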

\begin{proof}
\normalfont
The system in \cref{eqn1b,eqn1c} can also be represented as,
\begin{equation}\label{eqn_f3}
\begin{split}
\ddot{x}_i \left(t\right) =& -\dot{x}_i\left(t-\tau_1 \right) + \\& \frac{1}{\sum_{j=1}^n a_{ij}} \sum_{j=1}^n a_{ij}\left(x_j\left(t-\tau_2\right) - x_i\left(t-\tau_1\right)\right)
\end{split}
\end{equation}
Converting it into $s$ domain will give the characteristic expression as,
\begin{equation}\label{eqn_f4}
\begin{split}
& \left|s^2I + \left(s+1\right)Ie^{-s\tau_1}- \tilde{\mathcal{A}}e^{-s\tau_2}\right| = \\& \prod_{k=1}^{n}\left(s^2I + \left(s+1\right)Ie^{-s\tau_1}- \tilde{\lambda}_k e^{-s\tau_2}\right)
\end{split}
\end{equation}
Consider $\forall \tilde{\lambda}_k \neq 0$,
\begin{equation}\label{eqn_f5}
G_k\left(s\right) = \dfrac{-\tilde{\lambda}_k e^{-s\tau_2}}{s^2+\left(s+1\right)e^{-s\tau_1}}
\end{equation}
Magnitude expression of \cref{eqn_f5} is given by,
\begin{equation}\label{eqn_f6}
\begin{split}
& \big|G_k\left(j\omega\right)\big| = \\& \dfrac{|\tilde{\lambda}_k|}{\sqrt{\omega^4-2\omega^3\sin\left(\omega \tau_1\right)+\omega^2\left(1-2\cos\left(\omega \tau_1\right)\right)+1}}
\end{split}
\end{equation}
Phase expression of \cref{eqn_f5} is given by,
\begin{equation}\label{eqn_f7}
\begin{split}
\phase{ G_k\left(j\omega\right) }= & -\omega \tau_2 + \arg\left(-\tilde{\lambda}_k\right) - \\& \arctan\left(\dfrac{\omega\cos\left(\omega \tau_1\right)-\sin\left(\omega \tau_1\right)}{-\omega^2+\cos\left(\omega\tau_1\right)+\omega\sin\left(\omega\tau_1\right)}\right)
\end{split}
\end{equation}
Let us assume at $\omega = \overline{\omega}$, the Nyquist plot intersects with negative real axis. The phase at $\omega = \overline{\omega}$ is $-\pi$,
\begin{equation}\label{eqn_f8}
\begin{split}
-\pi= & -\overline{\omega} \tau_2 + \arg\left(-\tilde{\lambda}_k\right) - \\& \arctan\left(\dfrac{\overline{\omega}\cos\left(\overline{\omega} \tau_1\right)-\sin\left(\overline{\omega} \tau_1\right)}{-\overline{\omega}^2+\cos\left(\overline{\omega}\tau_1\right)+ \overline{\omega} \sin\left(\overline{\omega} \tau_1\right)}\right)
\end{split}
\end{equation}
By applying Nyquist stability criterion, magnitude given in \cref{eqn_f6} should satisfy the condition as given in \cref{eqn_f9}.
\begin{equation}\label[ineq]{eqn_f9}
\dfrac{|\tilde{\lambda}_k|}{\sqrt{\overline{\omega}^4-2\overline{\omega}^3\sin\left(\overline{\omega} \tau_1\right)+\overline{\omega}^2\left(1-2\cos\left(\overline{\omega} \tau_1\right)\right)+1}} < 1
\end{equation}
\QEDB
\end{proof}
\subsubsection{Second control law}\label{case-2}
Consider control input $u_{i2}$ given in \cref{eqn1d}, where,
\begin{theorem}\label{th_freq2}
The system represented in \cref{eqn1b,eqn1d} is stable if and only if,
\begin{equation}\label{eqn_f10}
\dfrac{|\tilde{\lambda}_k|\sqrt{1+\overline{\omega}^2}}{\sqrt{\overline{\omega}^4-2\overline{\omega}^3\sin\left(\overline{\omega} \tau_1\right)+\overline{\omega}^2\left(1-2\cos\left(\overline{\omega} \tau_1\right)\right)+1}} < 1
\end{equation}
where $\overline{\omega}$ satisfies,
\begin{equation}\label{eqn_f11}
\begin{split}
-\pi = - & \overline{\omega} \tau_2 + \arg\left(-\tilde{\lambda}_k\right) +   \arctan\left(\overline{\omega}\right)- \\& \arctan\left(\dfrac{\overline{\omega}\cos\left(\overline{\omega} \tau_1\right)-\sin\left(\overline{\omega} \tau_1\right)}{-\overline{\omega}^2+\cos\left(\overline{\omega}\tau_1\right)+\overline{\omega} \sin\left(\overline{\omega}\tau_1\right)}\right) 
\end{split}
\end{equation}
\end{theorem}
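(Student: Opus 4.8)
The plan is to follow the template of the proof of \cref{th_freq1}, the only new feature being that the control law $u_{i2}$ feeds back velocity differences as well as position differences, which inserts an extra factor $s+1$ in the loop numerator. First I would rewrite the linear element \cref{eqn1b} driven by \cref{eqn1d} as a single second-order delay equation in $x_i$ (using $v_i=\dot x_i$ and $\sum_j\tilde a_{ij}=1$),
\[
\ddot x_i(t) = -\dot x_i(t-\tau_1) - x_i(t-\tau_1) + \sum_{j=1}^n \tilde a_{ij}\big(\dot x_j(t-\tau_2)+x_j(t-\tau_2)\big),
\]
take Laplace transforms under the zero initial history assumed in \cref{sec3}, and obtain the characteristic determinant $\big|s^2 I + (s+1)e^{-s\tau_1}I - (s+1)e^{-s\tau_2}\tilde{\mathcal{A}}\big|$. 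As in \cref{eqn_f4}, this factors over the eigenvalues $\tilde\lambda_k$ of $\tilde{\mathcal{A}}$ into $\prod_{k=1}^n\big(s^2 + (s+1)e^{-s\tau_1} - \tilde\lambda_k(s+1)e^{-s\tau_2}\big)$, and the $\tilde\lambda_k=0$ factor is discarded because it corresponds to the consensus subspace and not to the error dynamics.

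Next, for each $\tilde\lambda_k\neq 0$ I would isolate the open-loop transfer function
\[
G_k(s) = \frac{-\tilde\lambda_k (s+1)e^{-s\tau_2}}{s^2+(s+1)e^{-s\tau_1}},
\]
so that the $k$-th characteristic factor vanishes exactly when $1+G_k(s)=0$. The denominator is identical to the one in \cref{th_freq1}, so its modulus and argument are already available; the only extra contributions come from the numerator factor $s+1$, which at $s=j\omega$ adds $\sqrt{1+\omega^2}$ to the magnitude and $\arctan(\omega)$ to the phase. Assembling these yields $|G_k(j\omega)|$ exactly as in \cref{eqn_f10} and the phase of $G_k(j\omega)$ exactly as in the right-hand side of \cref{eqn_f11}.

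Finally I would apply the Nyquist criterion as in the proof of \cref{th_freq1}: since $G_k$ has no open-loop unstable poles, $1+G_k$ is stable if and only if the Nyquist plot of $G_k$ does not encircle $-1$, i.e. if and only if at the frequency $\overline\omega$ where the phase first equals $-\pi$ (the negative-real-axis crossing, which is precisely \cref{eqn_f11}) the magnitude is strictly below one, which is \cref{eqn_f10}; imposing this for every nonzero $\tilde\lambda_k$ gives the stated equivalence. The main obstacle I anticipate is not conceptual but bookkeeping: carefully carrying out the algebra for $|s^2+(s+1)e^{-s\tau_1}|^2$ and for the two-argument arctangent of its argument so they match the closed forms reused from \cref{th_freq1}, and keeping track of the correct branches of $\arg(-\tilde\lambda_k)$ and of that arctangent when the real part of $s^2+(s+1)e^{-s\tau_1}$ changes sign.
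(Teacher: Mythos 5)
Your proposal is correct and is exactly the argument the paper intends: the paper's own proof of this theorem is literally one line ("follows a similar procedure as \cref{th_freq1}"), and your filled-in details — the characteristic determinant $\bigl|s^2I+(s+1)e^{-s\tau_1}I-(s+1)e^{-s\tau_2}\tilde{\mathcal{A}}\bigr|$, the loop transfer function $G_k(s)=-\tilde\lambda_k(s+1)e^{-s\tau_2}/\bigl(s^2+(s+1)e^{-s\tau_1}\bigr)$, and the extra $\sqrt{1+\omega^2}$ magnitude and $\arctan(\omega)$ phase contributions from the numerator factor $s+1$ — reproduce \cref{eqn_f10,eqn_f11} precisely. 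In fact you supply more detail (and flag the branch-tracking caveats) than the paper itself does.
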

\begin{proof}
\normalfont
The proof follows a similar procedure as given in \cref{th_freq1}
\QEDB
\end{proof}
\section{Simulation and Implementation Results}\label{sec4}
The communication topologies considered for simulation and implementation are depicted in \cref{graph1,graph2}. The graph in \cref{graph1} is undirected, strongly connected and 2-regular balanced with each node receiving states' information from two neighbours and sending states' information to the same neighbours. The graph in {\cref{graph2}} is directed, has a spanning tree and unbalanced. Using the results obtained in \cref{th_lyp,th_freq1,th_freq2}, limits on communication time-delay for given input-delays are calculated for both the topologies. The feasibility of LMIs given in \cref{th_lyp} is solved using SeDuMi \cite{sedumi} solver for Matlab/Octave. The expressions in \cref{th_freq1,th_freq2} have three unknowns $(\omega, \tau_1, \tau_2)$, a unique solution can be obtained if it is assumed that $\tau_1 = \tau_2$ or else, stable range of $\tau_2$ for a given $\tau_1$ have to be found. Dominant pole for both the topologies in \cref{graph1,graph2} is $-1$, Nyquist plot used in one of the cases with $\lambda = -1$ and assumption $\tau_1 = 0.4$ is shown in \cref{nyq1}. It can be observed that the system is stable if $\tau_2 < 0.48$ (\cref{nyq1}). The time-delay tolerances are tested with simulations and on a hardware setup (\cref{hw_fig}). The results obtained using Lyapunov-Krasovskii and Nyquist approaches from \cref{th_lyp,th_freq1,th_freq2} are tabulated in \cref{tbl1,tbl2}, corresponding plots of $\tau_1$ vs $\tau_2$ depicting stable regions are given in \cref{t1_vs_t2_a,t1_vs_t2_b}.

\begin{figure}[!h]
\centering
\begin{subfigure}[b]{0.45\linewidth}
\centering
\includegraphics[scale=1]{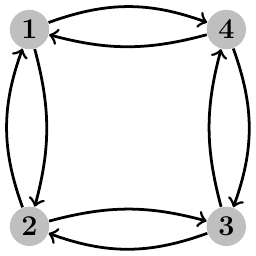}
\caption{Four agents.\label{graph1}}
\end{subfigure}\qquad
\begin{subfigure}[b]{0.45\linewidth}
\centering
\includegraphics[scale=1]{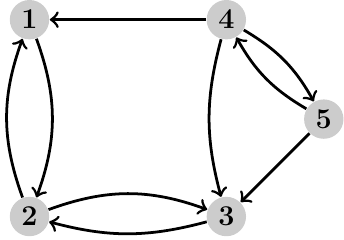}
\caption{Five agents.\label{graph2}}
\end{subfigure}
\caption{Graphs of communication topologies.}
\end{figure}


\begin{figure}[!h]
\centering
\includegraphics[scale=0.14]{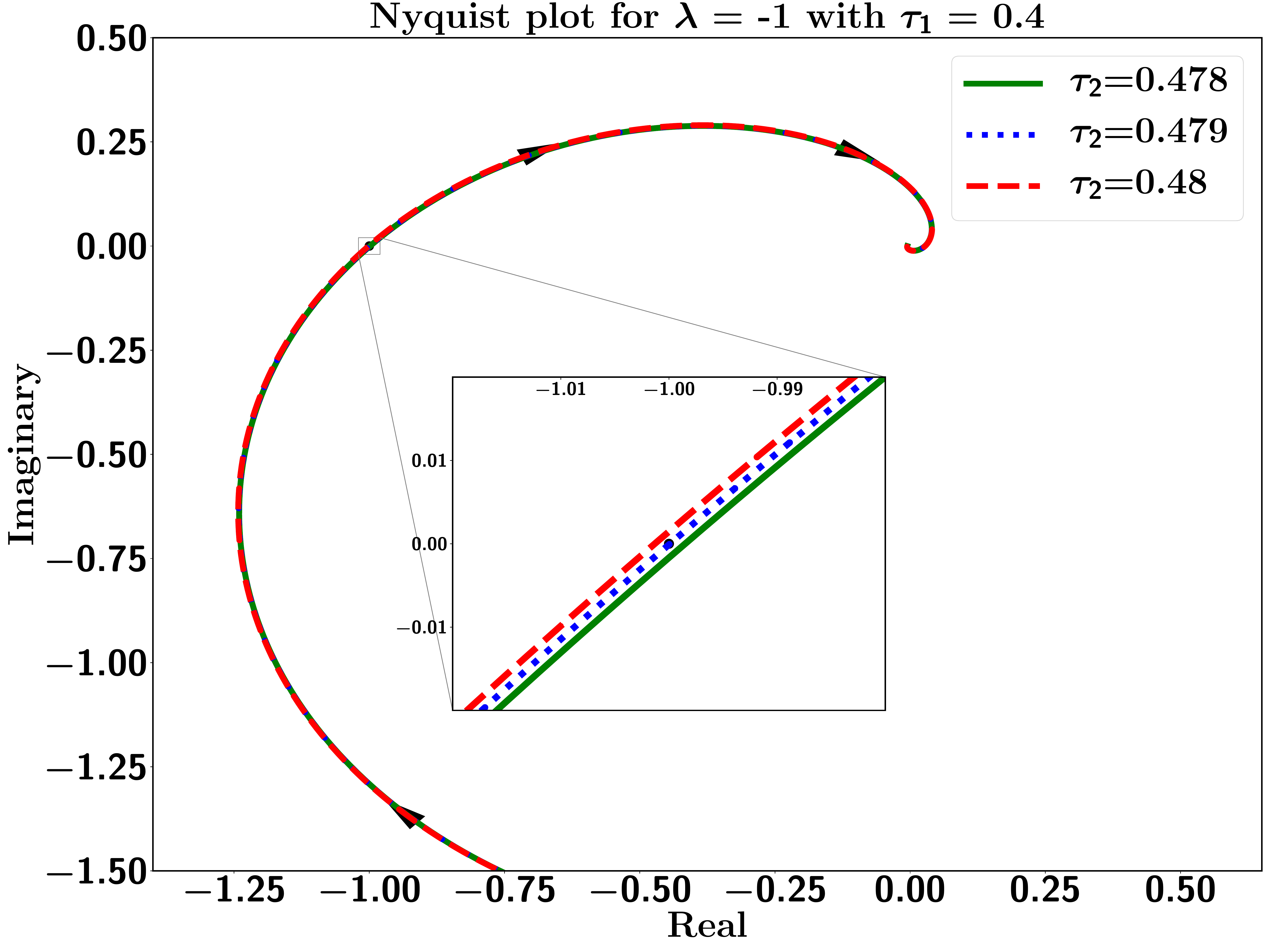}
\caption{Nyquist plot with $u_{i1}$ in \cref{eqn1c} for different time delays.\label{nyq1}}
\end{figure}


\begin{figure}[!h]
\centering
\includegraphics[scale=0.09]{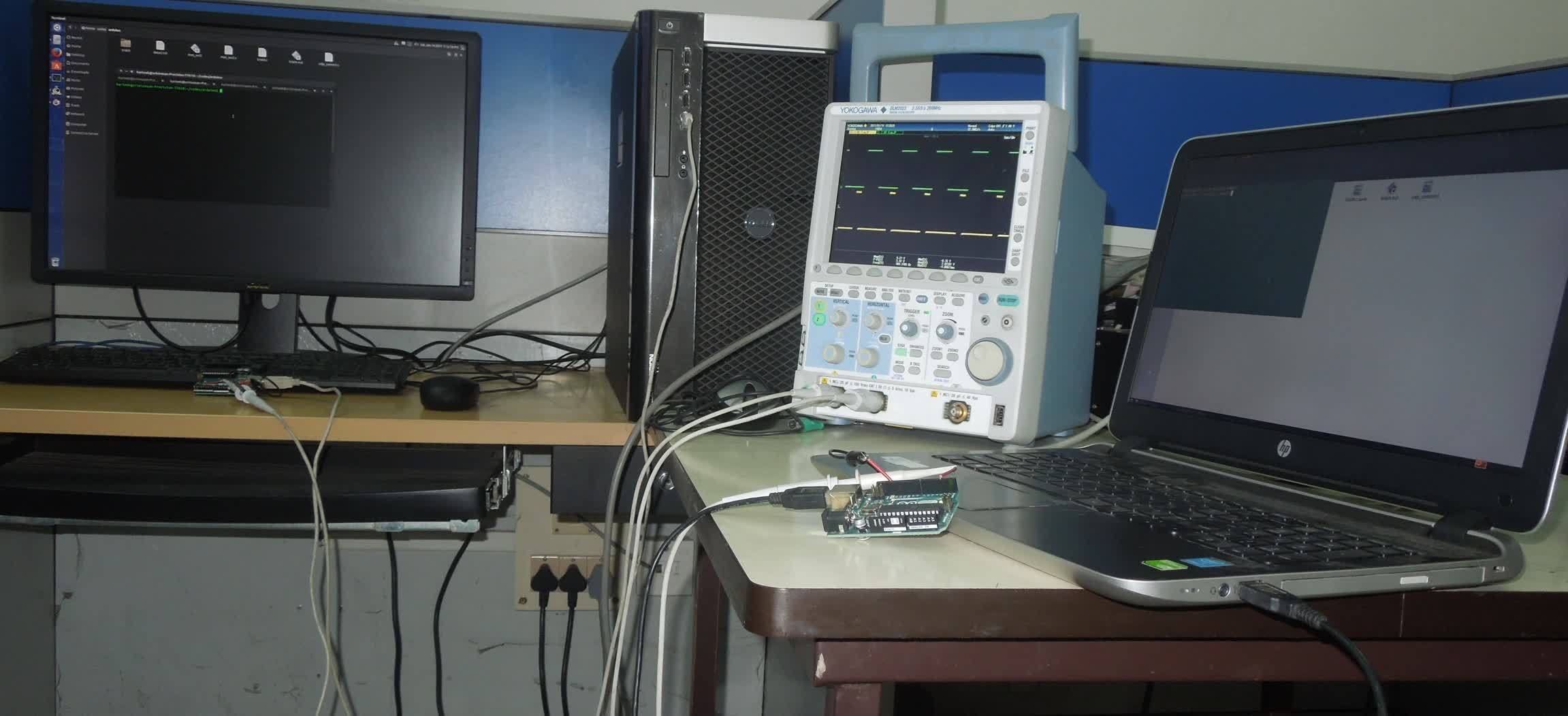}
\caption{Hardware setup connected by LAN.\label{hw_fig}}
\end{figure}


From the results in \cref{tbl1,tbl2} and \cref{t1_vs_t2_a,t1_vs_t2_b}, it can be deduced that the Lyapunov-Krasovskii approach is conservative compared to Nyquist approach with respect to time-delay. Conservativeness of Lyapunov-Krasovskii approach is more evident for topology in \cref{graph2} with control law in \cref{eqn1d}. Multi-agent systems connected by topologies in \cref{graph1,graph2} reach consensus with full range of time-delay given by Nyquist approach for control laws in \cref{eqn1c,eqn1d}. Nyquist approach in \cref{freq} is not applicable to control laws given in \cref{eqn1e,eqn1f}. Solving LMIs is computationally more intensive compared to solving of equations from \cref{th_freq1,th_freq2}. Moreover, the increase in computational time of solving LMIs is exponentially as the number of nodes are increased and the increase with Nyquist approach is linear.
 
\begin{figure}[!h]
\centering
\begin{subfigure}[b]{\linewidth}
\centering
\includegraphics[scale=0.14]{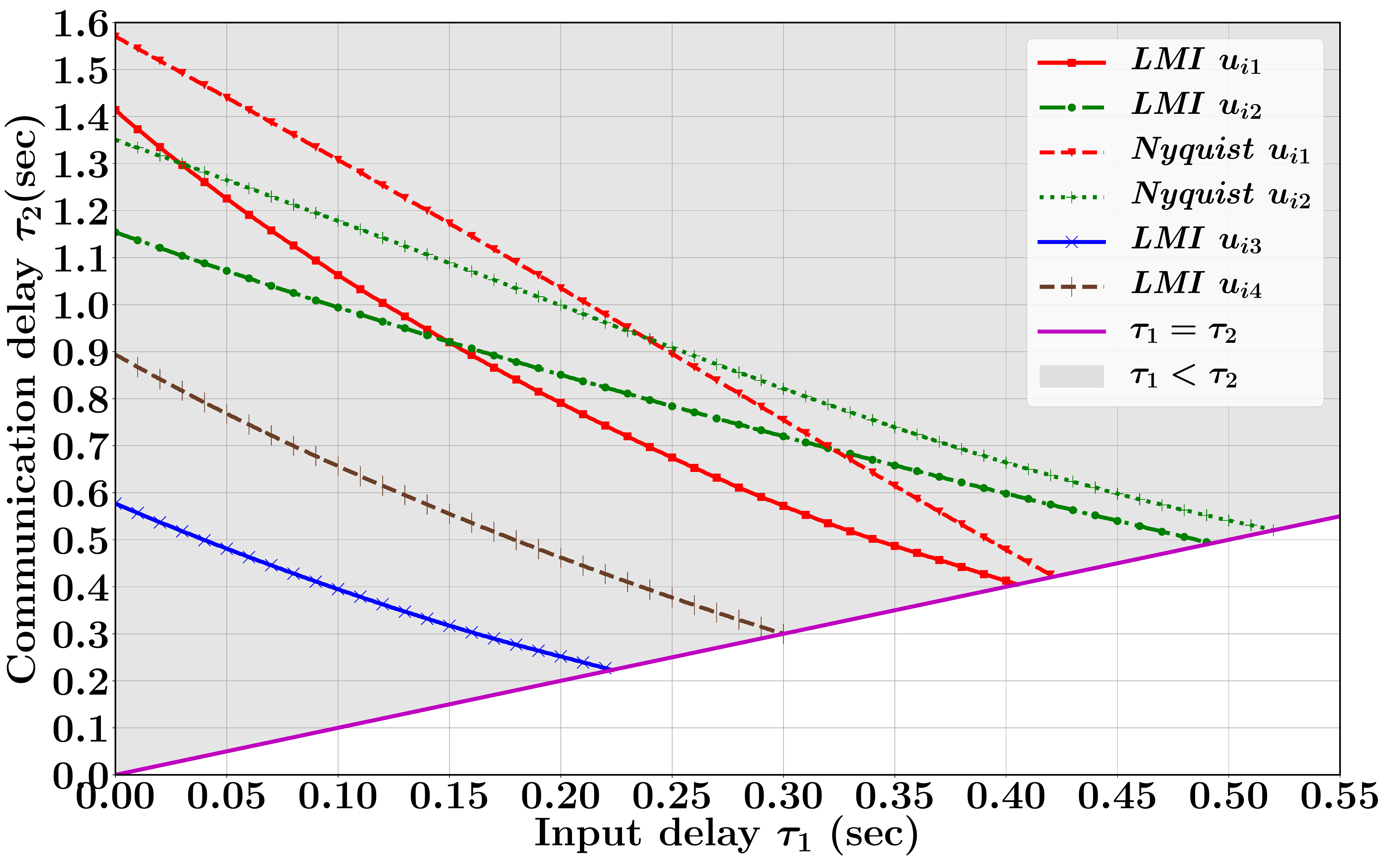}
\caption{Plot of $\tau_1$ vs $\tau_2$ for topology in \cref{graph1}.\label{t1_vs_t2_a}}
\end{subfigure}
\begin{subfigure}[b]{\linewidth}
\centering
\includegraphics[scale=0.14]{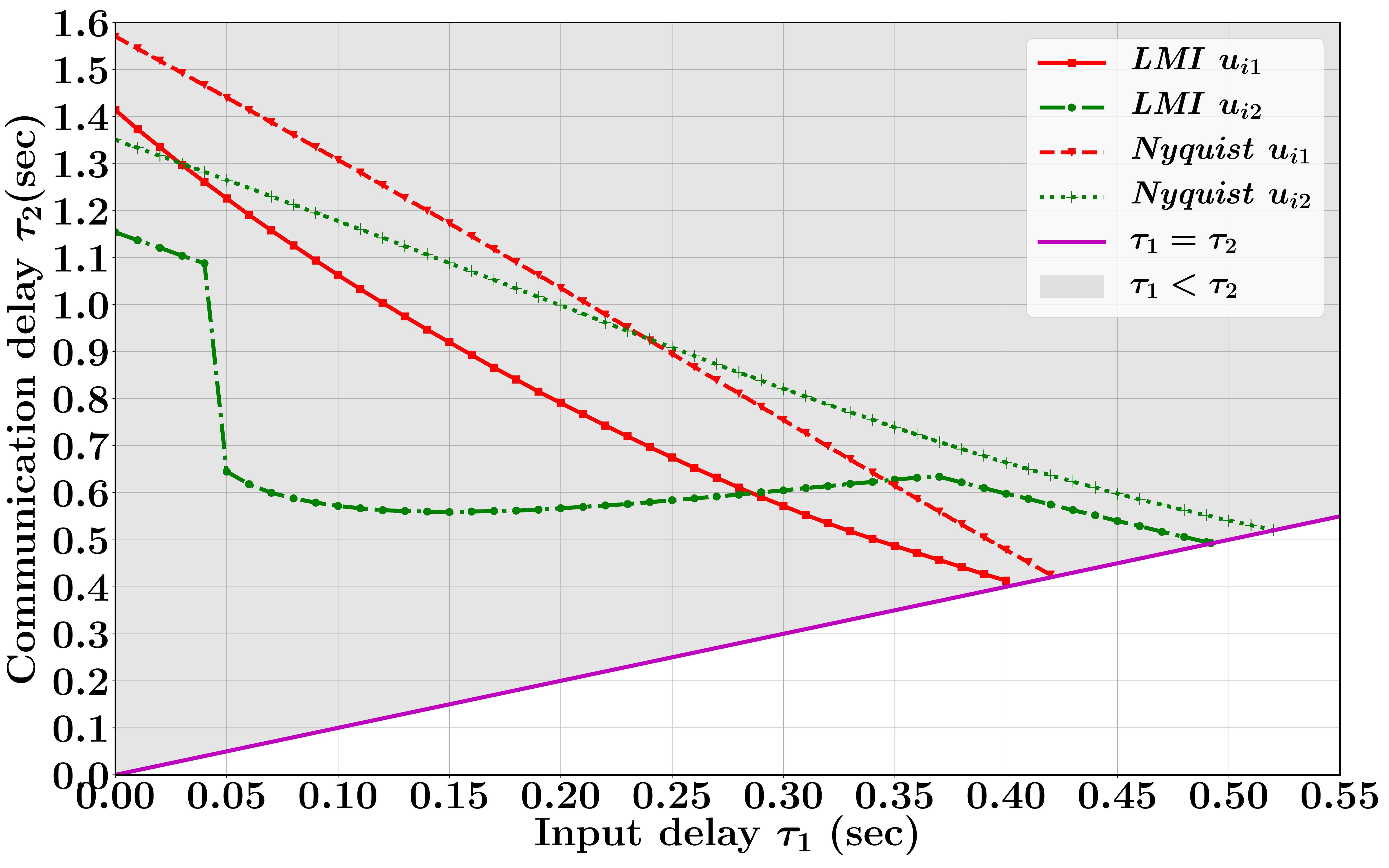}
\caption{Plot of $\tau_1$ vs $\tau_2$ for topology in \cref{graph2}.\label{t1_vs_t2_b}}
\end{subfigure}
\caption{Plots representing stable regions with $\tau_1 \leq \tau_2$.}
\end{figure}


\begin{table}[!h]
\processtable{Maximum value of $\tau_2$ for a given $\tau_1 \; (\tau_1 \leq \tau_2)$ with topology in \cref{graph1} and control inputs given in \cref{eqn1c,eqn1d,eqn1e,eqn1f}.\label{tbl1}}
{\begin{tabular*}{20pc}{@{\extracolsep{\fill}}lllllll@{}}\toprule
  \textbf{\boldmath$\tau_1$ (sec)} & \multicolumn{6}{c}{\textbf{\boldmath$\tau_2$ (sec)}} \\\midrule
    & \multicolumn{4}{c}{\emph{Lyapunov Approach}} & \multicolumn{2}{c}{\emph{Nyquist Approach}}\\
    &{$u_{i1}$} &{$u_{i2}$} & {$u_{i3}$} &{$u_{i4}$} &{$u_{i1}$} &{$u_{i2}$} \\\midrule
    0 & 1.414 & 1.154  & 0.577 & 0.894 & 1.570 & 1.351  \\
    0.1 & 1.063 & 0.994  & 0.395 & 0.656 & 1.308 & 1.178  \\
    0.2 & 0.791 & 0.851  & 0.252 & 0.462 & 1.035 & 0.999  \\
    0.3 & 0.572 & 0.720  & --- & 0.30 & 0.755 & 0.821  \\
    0.4 & 0.413 & 0.598  & --- & --- & 0.479 & 0.664  \\
    0.5 & --- & ---  & --- & --- & --- & 0.541  \\
    0.6 & --- & ---  & --- & --- & --- & ---  \\
    \boldmath$\tau_1=\tau_2$ & 0.405 & 0.492  & 0.223 & 0.3 & 0.421 & 0.520  \\
    \botrule
   \end{tabular*}}{}
\end{table}

\begin{table}[!h]
\processtable{Maximum value of $\tau_2$ for a given $\tau_1 \; (\tau_1 \leq \tau_2)$ with topology in \cref{graph2} and control inputs given in \cref{eqn1c,eqn1d}.\label{tbl2}}
{\begin{tabular*}{20pc}{@{\extracolsep{\fill}}lllll@{}}\toprule
  \textbf{\boldmath$\tau_1$ (sec)} & \multicolumn{4}{c}{\textbf{\boldmath$\tau_2$ (sec)}} \\\midrule
    & \multicolumn{2}{c}{\emph{Lyapunov Approach}} & \multicolumn{2}{c}{\emph{Nyquist Approach}}\\
    &{$u_{i1}$} &{$u_{i2}$} &{$u_{i1}$} &{$u_{i2}$} \\\midrule
    0 & 1.414 & 1.154 & 1.570 & 1.351  \\
    0.1 & 1.063 & 0.572 & 1.308 & 1.178  \\
    0.2 & 0.791 & 0.567 & 1.035 & 0.999  \\
    0.3 & 0.572 & 0.605 & 0.755 & 0.821  \\
    0.4 & 0.413 & 0.598 & 0.479 & 0.664  \\
    0.5 & --- & ---  & --- & 0.541  \\
    0.6 & --- & ---  & --- & ---  \\
    \boldmath$\tau_1=\tau_2$ & 0.405 & 0.492 & 0.421 & 0.520  \\
    \botrule
   \end{tabular*}}{}
\end{table}

Simulations are performed using scripts written in C to have uniformity with hardware implementation. Implementations of corresponding simulations are performed on a network of four agents. Four Arduino-Uno boards for topology in \cref{graph1} and five Arduino-Uno boards for topology in \cref{graph2} are considered as nodes of sensor networks, all of them are connected to host computers using serial interface. Host computers are connected by LAN switch locally and communication topology is based on the graphs shown in \cref{graph1,graph2}. All the implementations are performed after time synchronization at the start of each run with a local server through \emph{ntp} protocol. \emph{UDP} packet switching is used for communication and appropriate precautions like time-stamping of packets are taken to ensure that packets are received in order. Owing to limitations in capability of hardware, the step-size is chosen as $10ms$.  \cref{hw_fig} depicts hardware setup consisting of two agents with connection settings as described earlier, pulse width of both the agents can be observed in the display of two channel DSO.

\begin{figure}[!h]
\centering
\begin{subfigure}[b]{\linewidth}
\centering
\includegraphics[scale=0.14]{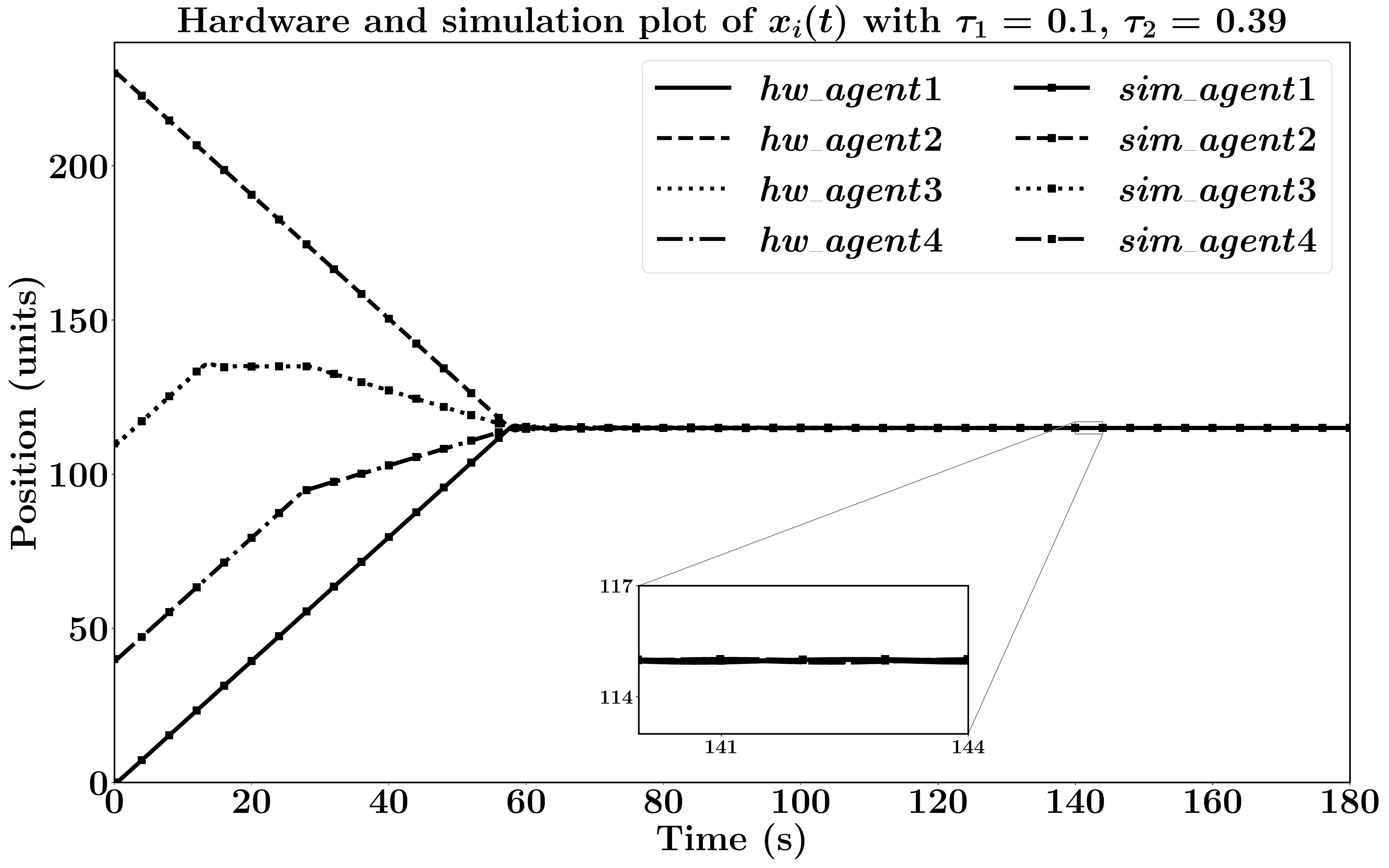}
\end{subfigure}
\begin{subfigure}[b]{\linewidth}
\centering
\includegraphics[scale=0.14]{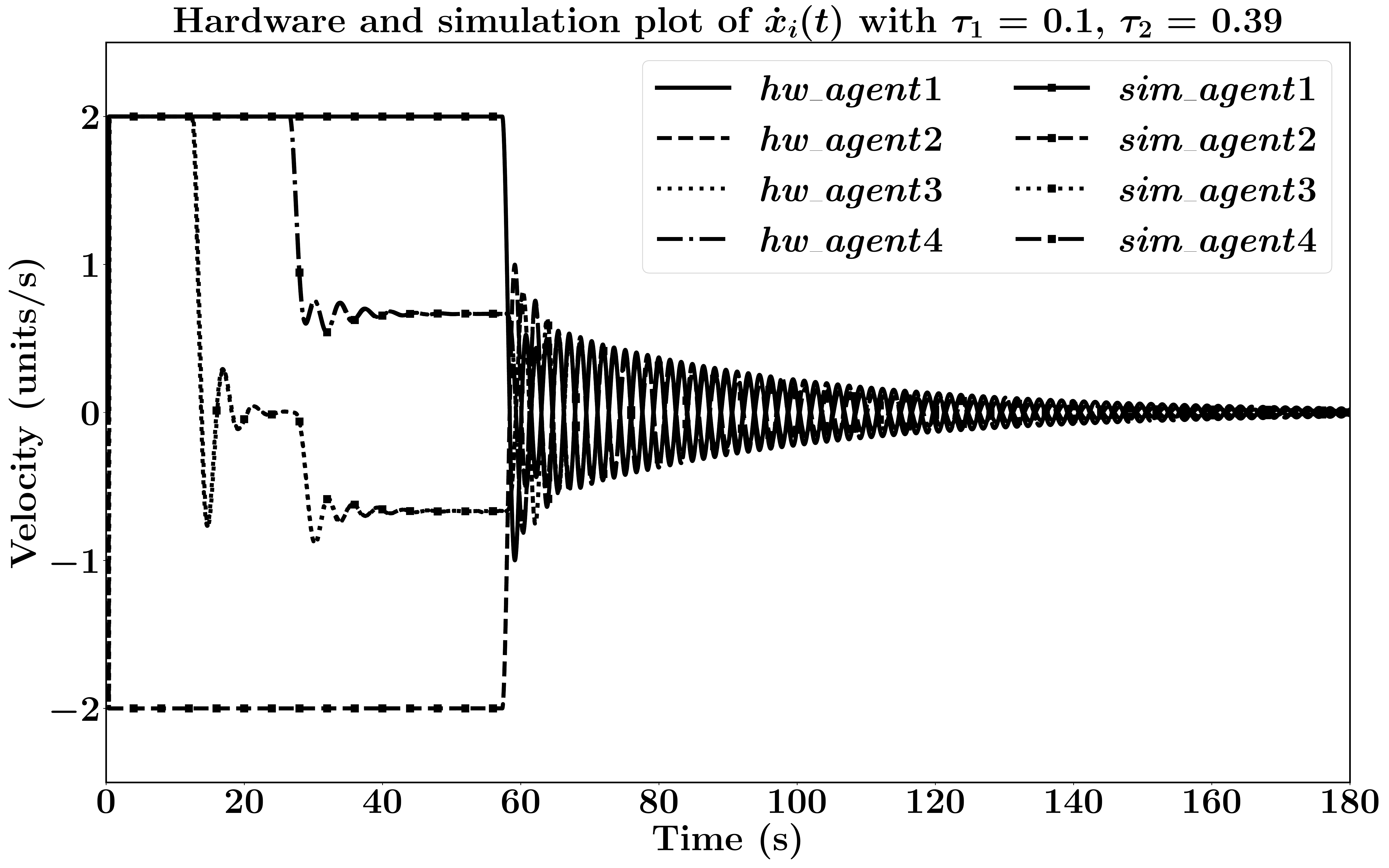}
\end{subfigure}
\caption{Plots with $u_{i3}$ in \cref{eqn1f} for asynchronous time delays.\label{res1}}
\end{figure}

Input time-delay $\tau_1$ and communication time-delay $\tau_2$ are user-defined and implemented from code. A minor deviation is visible between simulation and implementation results due to additional delay ($\approx 4ms$) in implementation due to actual processing and communication. The pulse width of PWM wave generated from an Arduino-Uno is considered as state $x_i$ and the rate of change is considered as $v_i$ of each agent. 

Initial conditions for both simulation and implementation for topology in \cref{graph1} are assumed to be, $\{x_{i0}\}=[0, 230, 110, 40]$ and $\{v_{i0}\}=[0,0,0,0]$. Plots with overlapping simulation and hardware results are given in \cref{res1,res2,res15,res1617} to show the effectiveness of theoretical results given in \cref{tbl1,tbl2,tbl3,t1_vs_t2_a,t1_vs_t2_b}. For control law in \cref{eqn1e} with input delay $\tau_1=0.1s$ and communication delay of $\tau_2=0.39s$, \cref{res1} depict states $x_i(t)$ and $\dot{x}_i(t)$ vs time in seconds respectively. Similarly, \cref{res2} show the plots of states for control law in \cref{eqn1f} with $\tau_1=0.2s$ and  of $\tau_2=0.46s$. It can be observed that the difference in states, $\boldsymbol{\Psi}$ is asymptotically converging to zero and reinforcing the effectiveness of the theoretical results given in \cref{tbl1,t1_vs_t2_a}.

\begin{figure}[!h]
\centering
\begin{subfigure}[b]{\linewidth}
\centering
\includegraphics[scale=0.14]{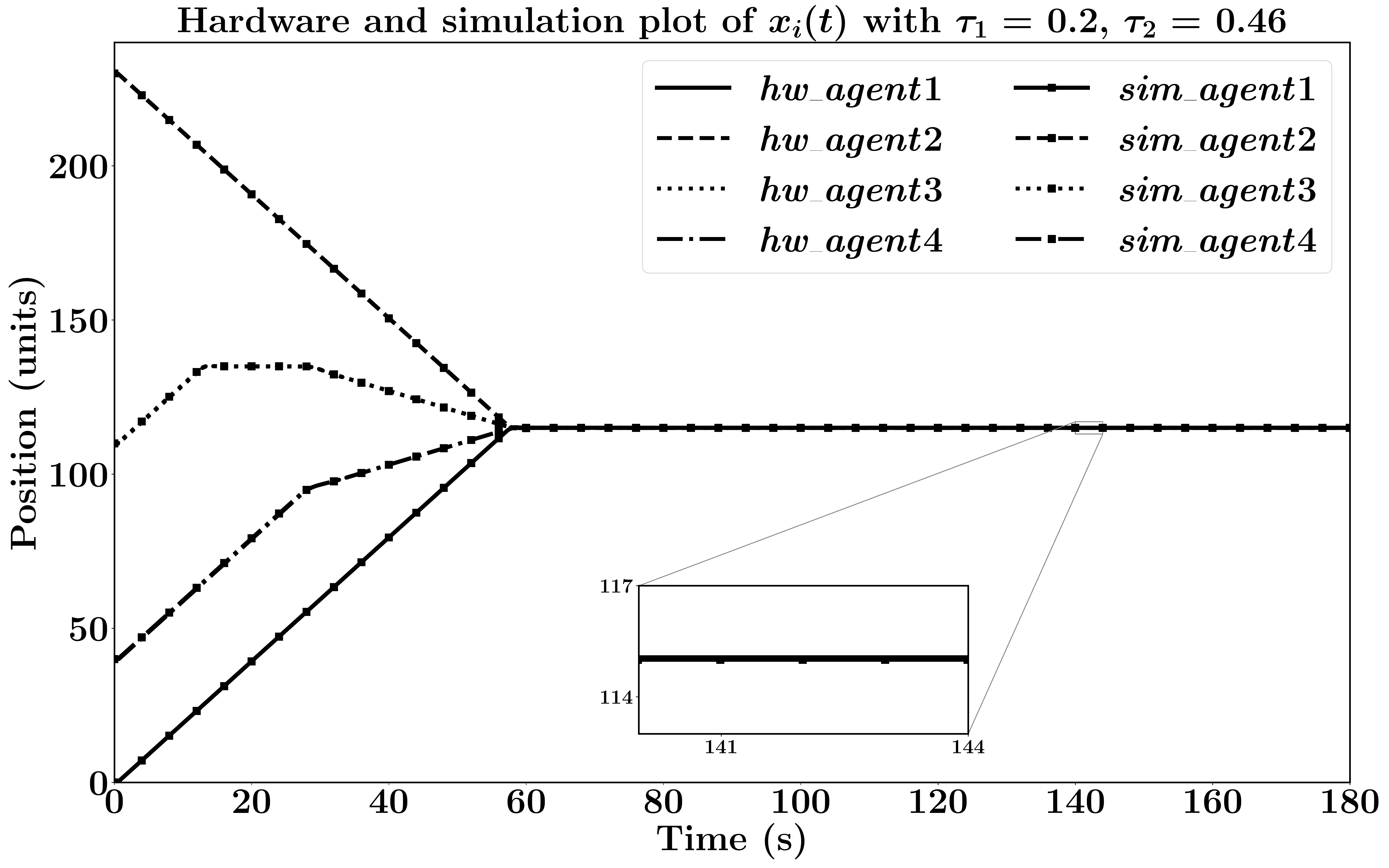}
\end{subfigure}
\begin{subfigure}[b]{\linewidth}
\centering
\includegraphics[scale=0.14]{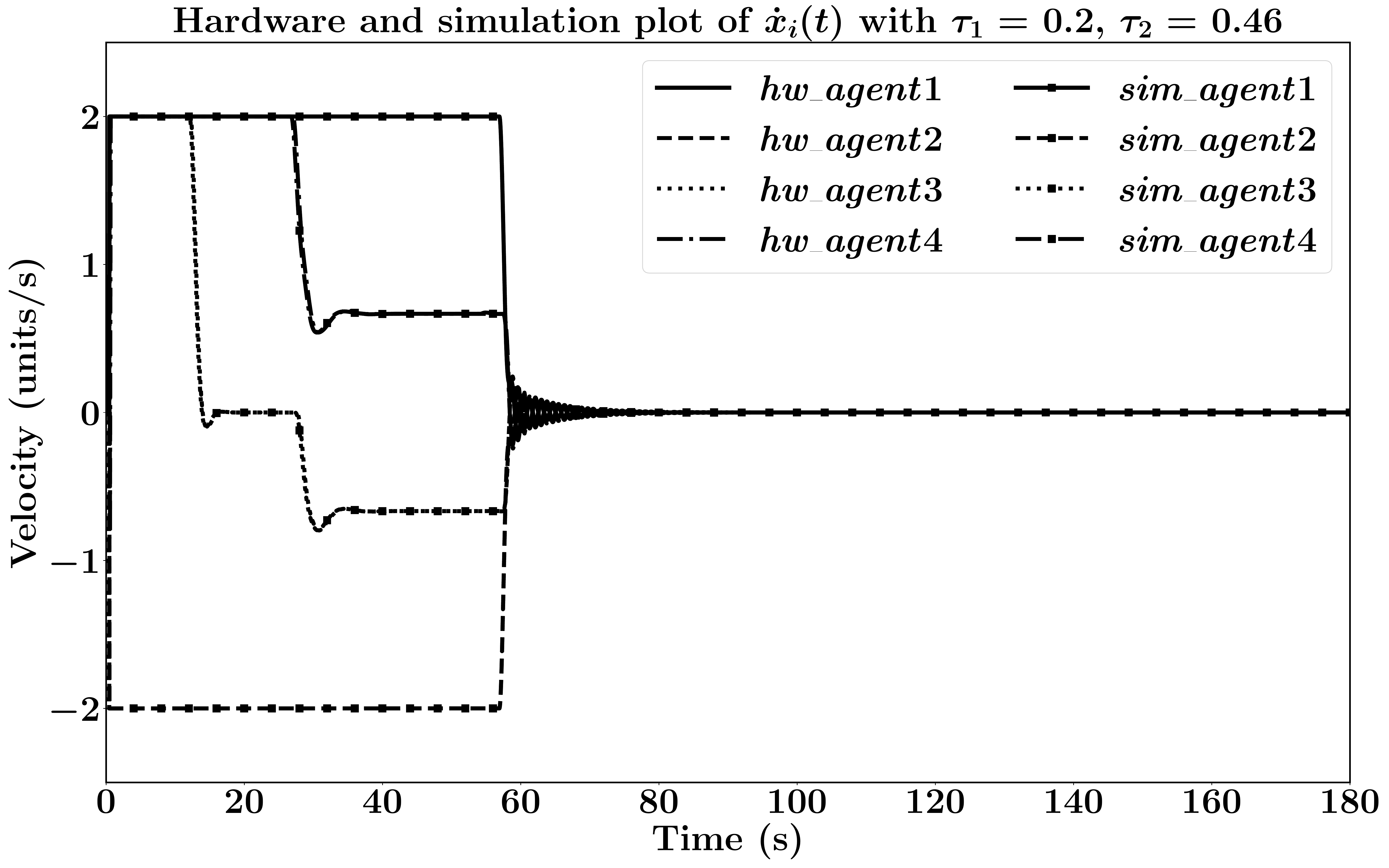}
\end{subfigure}
\caption{Plots with $u_{i4}$ in \cref{eqn1f} for asynchronous time delays.\label{res2}}
\end{figure}

\begin{figure}[!h]
\centering
\begin{subfigure}[b]{\linewidth}
\centering
\includegraphics[scale=0.14]{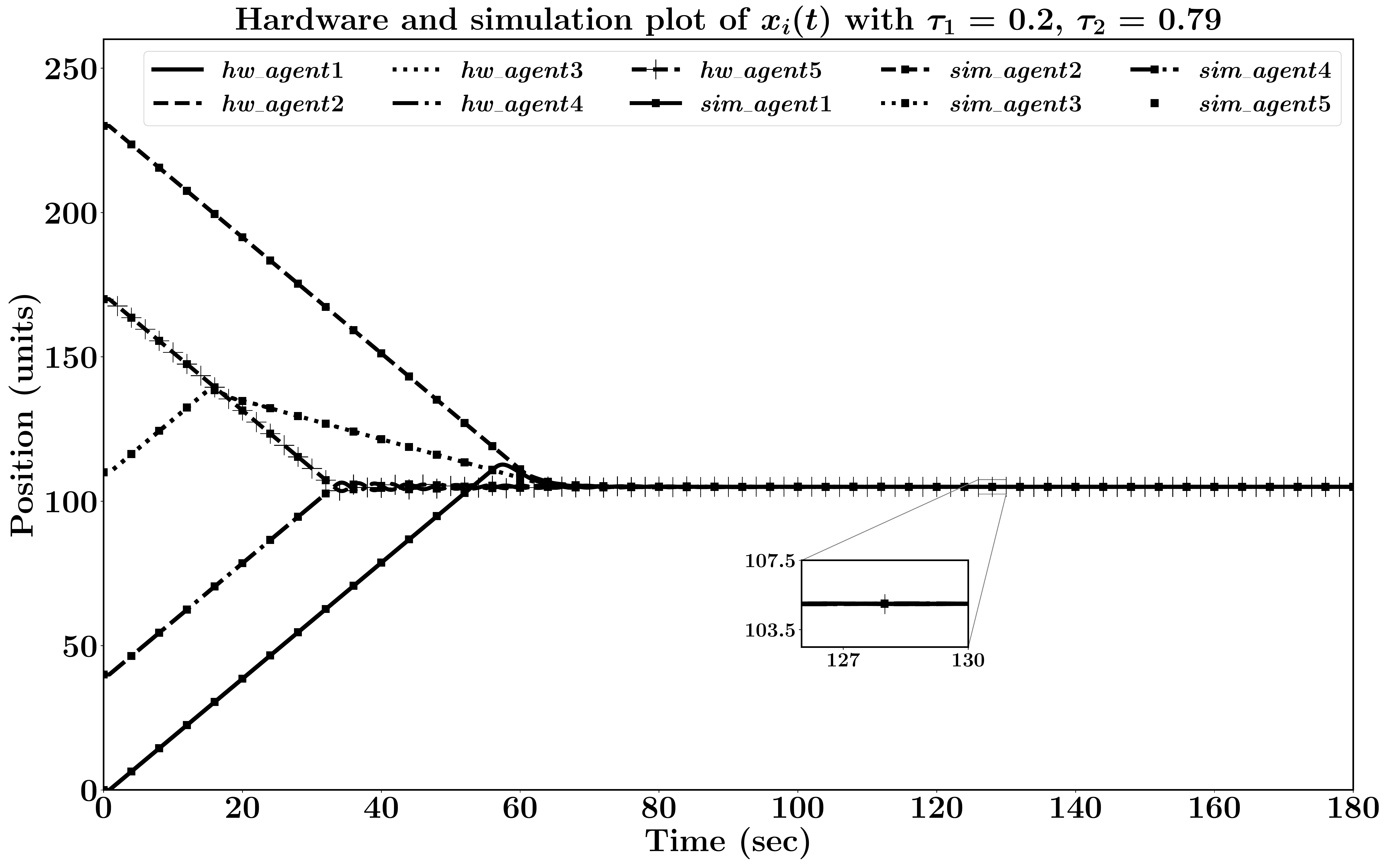}
\end{subfigure}
\begin{subfigure}[b]{\linewidth}
\centering
\includegraphics[scale=0.14]{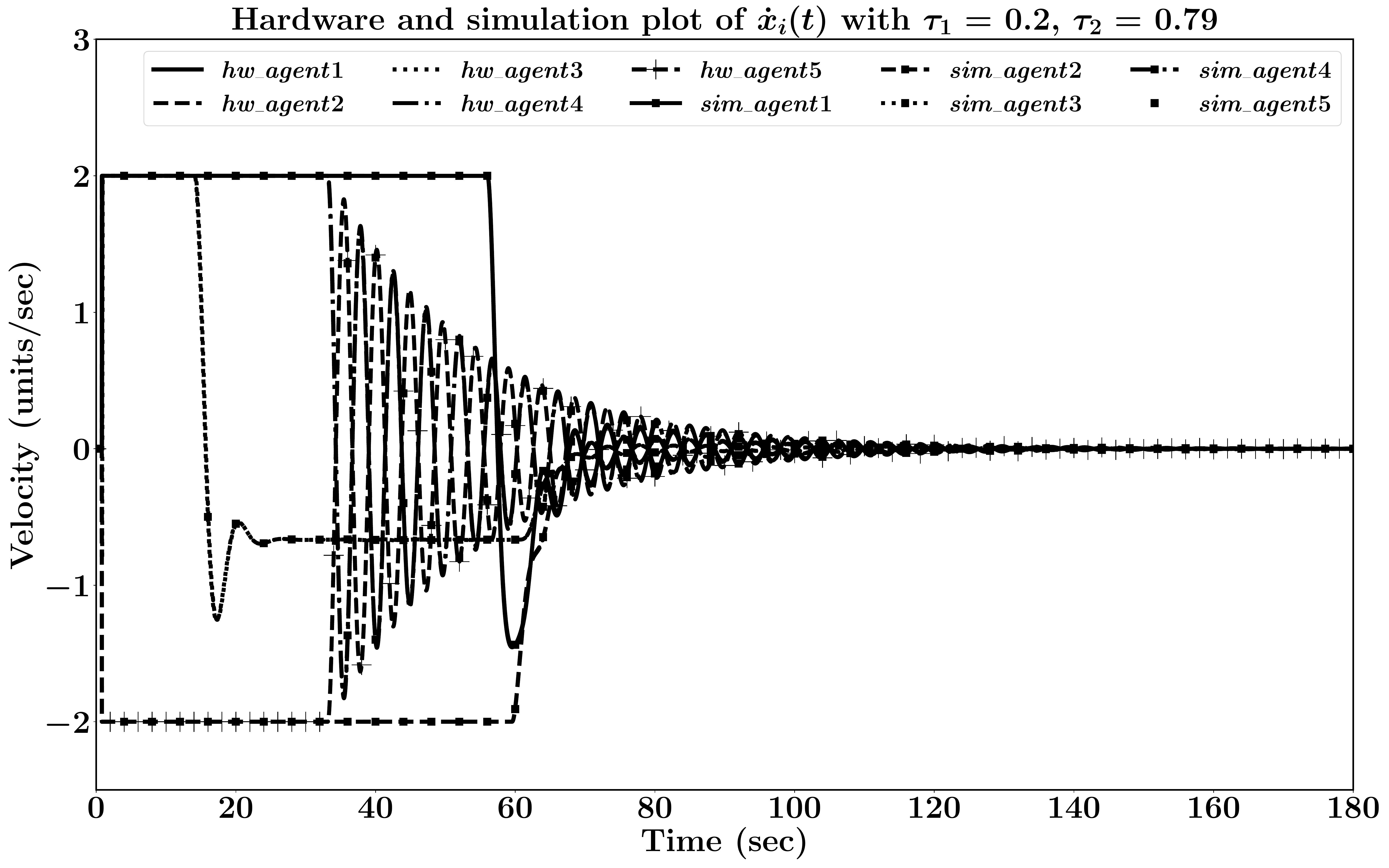}
\end{subfigure}
\caption{Plot with $u_{i1}$ in \cref{eqn1c} for uneven time delays.\label{res15}}
\end{figure}

Similarly, some simulations and corresponding hardware validations are performed on a five-agent system with communication topology given in \cref{graph2}, the corresponding adjacency matrix $\mathcal{A}$ and $\tilde{\mathcal{A}}$ are as given below, \\~\\
$\mathcal{A}=\begin{bmatrix}0& 1& 0& 1& 0\\ 1 &0& 1& 0& 0\\ 0& 1& 0& 1& 1\\ 0& 0& 0& 0& 1\\ 0& 0& 0& 1& 0 \end{bmatrix}, 
\tilde{\mathcal{A}}=\begin{bmatrix}0& \frac{1}{2}& 0& \frac{1}{2}& 0\\ \frac{1}{2} &0& \frac{1}{2}& 0& 0\\ 0& \frac{1}{3}& 0& \frac{1}{3}& \frac{1}{3}\\ 0& 0& 0& 0& 1\\ 0& 0& 0& 1& 0\end{bmatrix} .$

The initial values for the five-agent system are considered as $x_1(0)=0$, $x_2(0)=230$, $x_3(0)=110$, $x_4(0)=40$, $x_5(0)=170$ and $\dot{x}_i(0)=0$, $i\in [1,5]$. From \cref{res15}, it can be observed that the system converges at a faster rate with $\tau_1=0.2$ and $\tau_2=0.79$. With $\tau_1=0.2$ and $\tau_2=1.03$ given by Nyquist approach, the system converges at a very slow rate in simulation and further slower in implementation due to added delay from communication links as discussed earlier, corresponding results are depicted in \cref{res1617}. Limit cycles are exhibited with $\tau_1=0.2$ and $\tau_2 \geq 1.04$ as shown in \cref{res1617}. The corresponding numerical values for Nyquist approach for $\lambda = -1$ are provided in \cref{tbl3}. 
\begin{figure}[!h]
\centering
\begin{subfigure}[b]{\linewidth}
\centering
\includegraphics[scale=0.14]{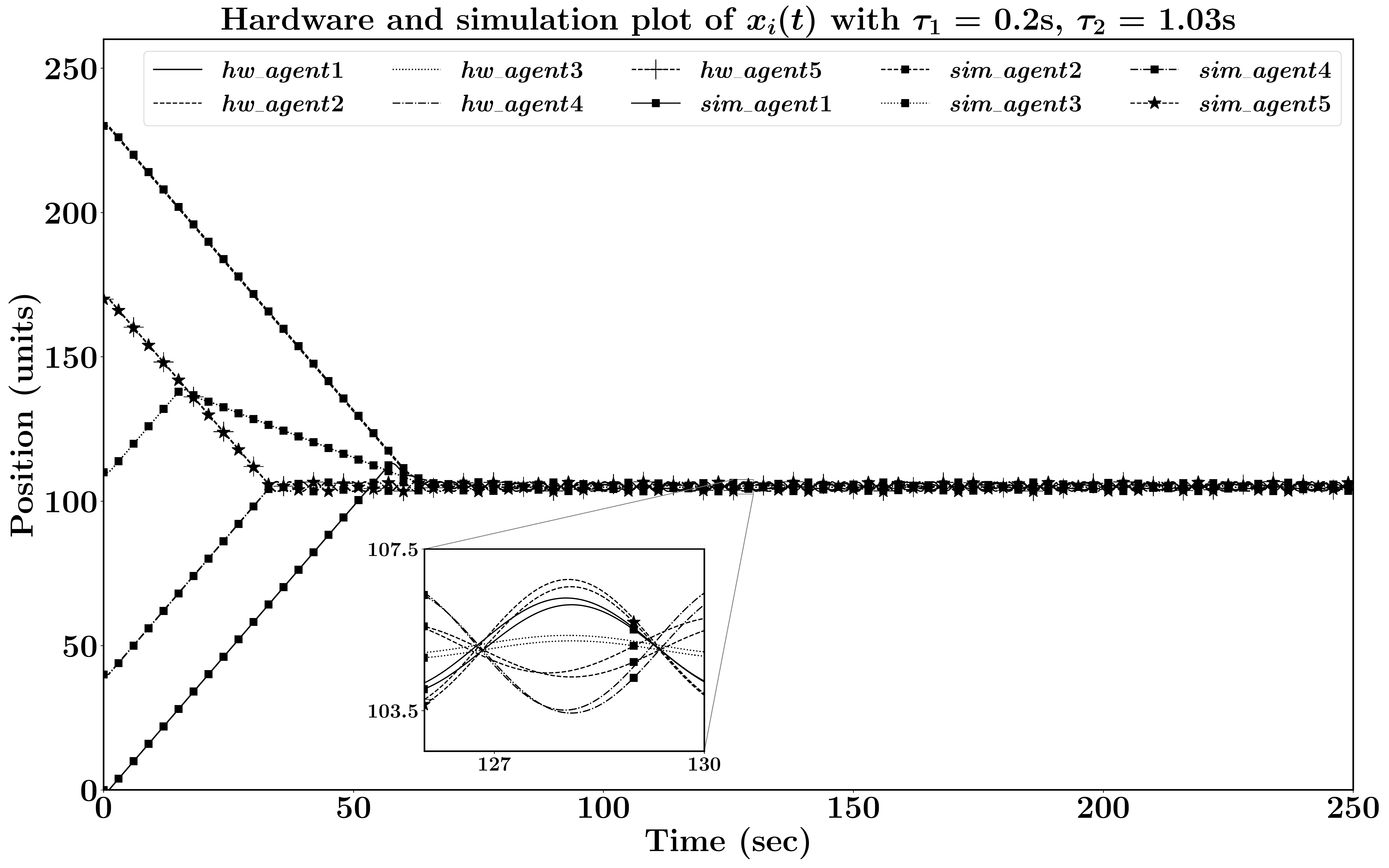}
\end{subfigure}
\begin{subfigure}[b]{\linewidth}
\centering
\includegraphics[scale=0.14]{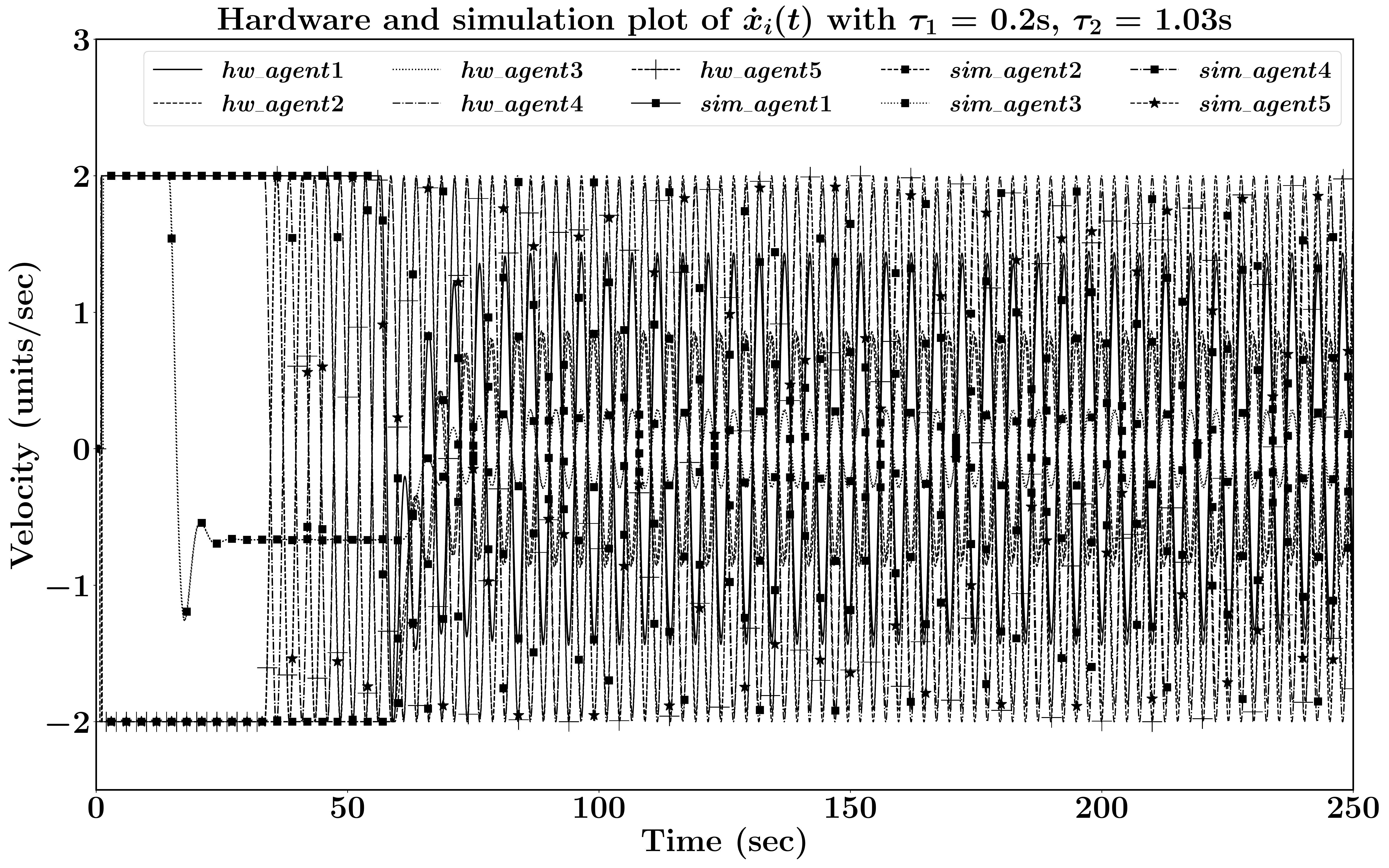}
\end{subfigure}
\begin{subfigure}[b]{\linewidth}
\centering
\includegraphics[scale=0.14]{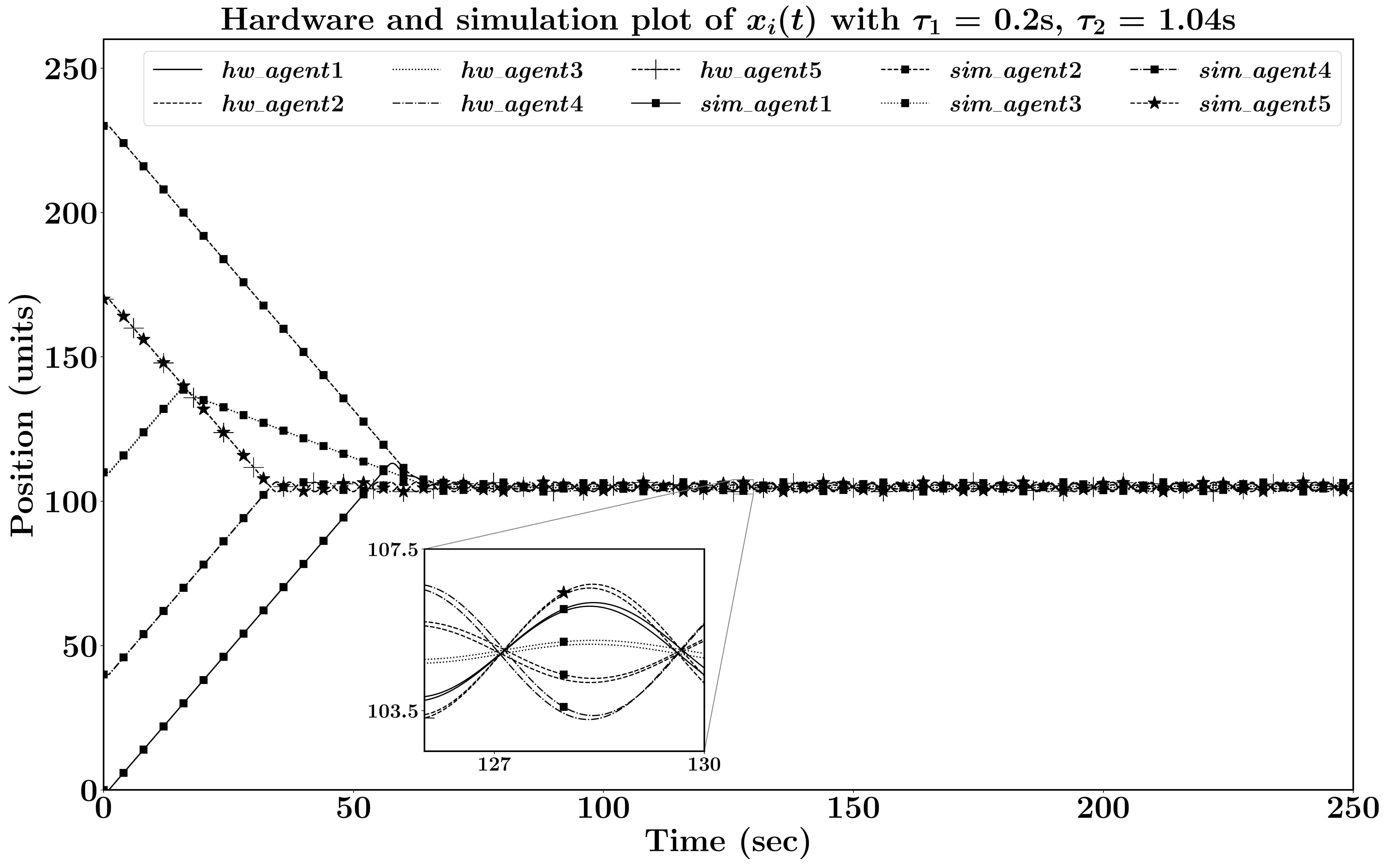}
\end{subfigure}
\begin{subfigure}[b]{\linewidth}
\centering
\includegraphics[scale=0.14]{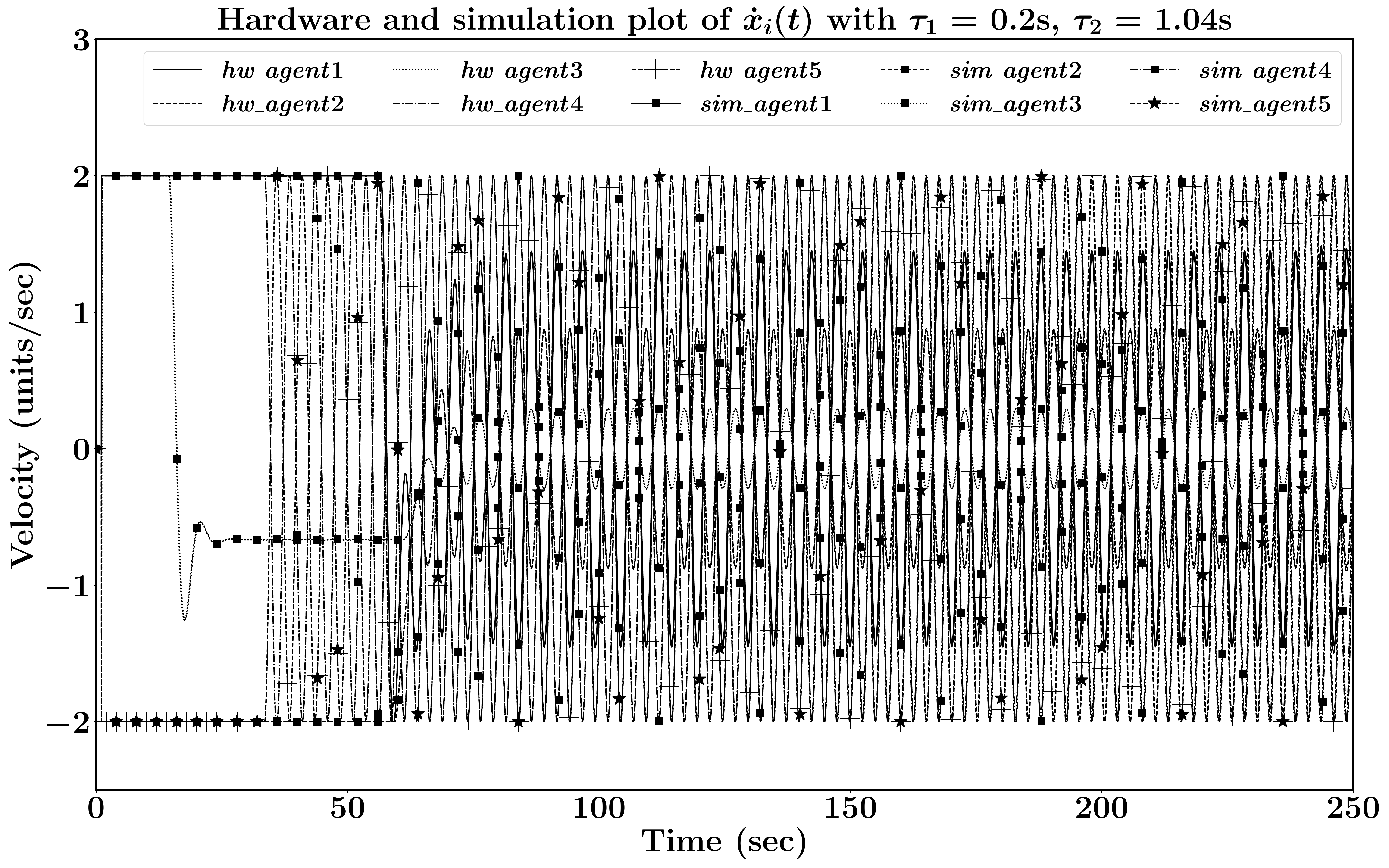}
\end{subfigure}
\caption{Plot with $u_{i1}$ in \cref{eqn1c} for uneven time delays.\label{res1617}}
\end{figure}

Using results in \cref{th_lyp,th_freq1,th_freq2}, stable regions with respect to time-delays for both the topologies in \cref{graph1,graph2} are calculated. Few of them are validated with the help of simulations and corresponding implementations as given above. \Cref{t1_vs_t2_a,t1_vs_t2_b} depict the calculated stable regions for topologies in \cref{graph1,graph2} respectively.

\begin{table}[!h]
\processtable{Values of $\overline{\omega}$ \& $\big|G_i\left(\overline{\omega}\right)\big|$ for different values of $\tau_1$ and $\tau_2$ with $u_{i1}$ in \cref{eqn1c} and communication topology in \cref{graph2}.\label{tbl3}}
{\begin{tabular*}{20pc}{@{\extracolsep{\fill}}llll@{}}\toprule
$\tau_1$ (sec) &$\tau_2$ (sec) & $\overline{\omega}$ & $\big|G_i\left(\overline{\omega}\right)\big|$  \\ 
\midrule
  0.2 & 0.79 & 1.375 & 0.841 \\ 
  0.2 & 1.03 & 1.249 & 0.997 \\ 
  0.2 & 1.04 & 1.244 & 1.002 \\ 
\botrule
\end{tabular*}}{}
\end{table}

\subsection{Remarks}\label{remarks}
\begin{enumerate}
\item Lyapunov-Krasovskii is applicable to all the four control laws provided the communication topology satisfy the conditions mentioned in Theorem-1, but Nyquist approach is not applicable. Lyapunov-Krasovskii approach is more conservative with respect to time-delay tolerance, whereas Nyquist approach gives the full range of time-delay.
\item Describing function analysis allows us to use Nyquist approach on approximated nonlinear multi-agent system, which is better at providing time-delay tolerance ranges compared to Lyapunov approach.
\item Compared to work in \cite{Liu2013}, we have considered saturation and time-delays in the system. An approximate analysis with the help of describing function is performed. Some conditions in \cref{th_lyp,th_freq1,th_freq2} are derived for reaching consensus and estimation of non-existence of limit cycles.
\item Compared to the research presented by authors in \cite{Li2011,Meng2013,Wei2014,Chu2015,Cui2016}, we have considered time-delays along with saturation in second order multi-agent systems. Compared to the work presented by You \emph{et al.} \cite{You2016}, asynchronous time-delays are considered rather than single delay.
\end{enumerate}

\section{Conclusion}\label{cncl}
The consensus problem for second order saturated multi-agent system with asynchronous communication and input time-delays is presented in the paper. An approximate system with separate linear and nonlinear elements is derived using describing function analysis to study the limit cycle behaviour. The instability of limit cycles or consensus reachability is estimated using describing functions, stability of linear elements with the help of Lyapunov-Krasovskii function and Nyquist stability criterion. Stable ranges of input and communication time-delays are calculated for different control laws using both the approaches and comparative results are presented. Justification to the theoretical results is done with the help of simulations and corresponding implementations on hardware. With current control laws, the system is not immune to external disturbances in the state information. Noise in the state information and its mitigation strategies will be considered in the future research.

\end{document}